\documentclass[lettersize,journal]{IEEEtran}
\usepackage{amsmath,amsfonts}
\usepackage{algorithmic}
\usepackage{algorithm}
\usepackage{array}
\usepackage[caption=false,font=normalsize,labelfont=sf,textfont=sf]{subfig}
\usepackage{textcomp}
\usepackage{stfloats}
\usepackage{url}
\usepackage{verbatim}
\usepackage{graphicx}
\usepackage{cite}

\usepackage{xcolor}
\definecolor{myred}{rgb}{0.8,0.1,0.6}

\usepackage{ulem}

\usepackage{tcolorbox}
\usepackage{adjustbox}
\usepackage{booktabs}
\usepackage{amsthm}
\usepackage{multirow}
\usepackage{amssymb}
\usepackage{newtxmath}
\newtheorem{definition}{Definition}[section]

\newtheorem{lemma}{Lemma}[section]

\usepackage{enumitem}

\begin{document}

\title{Is External Database Protection Static in Retrieval-Augmented Generation? Rethinking Privacy Preservation under Dynamic Queries}
%Prompt-aware Dynamic Hierarchical Differential Privacy Protection for RAG
%Is External Database Protection Static in Retrieval-Augmented Generation? Rethinking Privacy Preservation under Dynamic Queries
%Prompt-aware Hierarchical Differential Privacy Protection for Retrieval-augmented generation

\author{Gang Zhang$^1$, Mingyu Tian$^1$, Xukun Luan$^1$, Yuanchi Ma$^2$, Jinyan Liu$^1$\\
$^1$\textit{School of Computer Science and Technology, Beijing Institute of Technology , Beijing, China}\\
$^2$\textit{Department of Computer Science and Technology, Tsinghua University, Beijing, China}}

        % <-this % stops a space
%~\IEEEmembership{Student,~IEEE,}
%\thanks{This paper was produced by the IEEE Publication Technology Group. They are in Piscataway, NJ.}% <-this % stops a space
%\thanks{Manuscript received April 19, 2021; revised August 16, 2021.}

% The paper headers
\markboth{Journal of \LaTeX\ Class Files,~Vol.~14, No.~8, August~2021}%
{Shell \MakeLowercase{\textit{et al.}}: A Sample Article Using IEEEtran.cls for IEEE Journals}

% Remember, if you use this you must call \IEEEpubidadjcol in the second
% column for its text to clear the IEEEpubid mark.

\maketitle

\begin{abstract}
Retrieval-augmented generation (RAG) enhances large language models via external document retrieval,  but retrieved contexts may leak sensitive information. Current privacy protection methods typically rely on a document-level static risk assumption, treating all retrieved documents as having the same privacy leakage risk. However, this assumption overlooks a fundamental characteristic of RAG: the privacy risk of a document is highly dependent on the user's query, making privacy leakage inherently query-driven and dynamic.
To address this challenge, we propose a Prompt-Aware Dynamic Hierarchical Differential Privacy framework (PA-HDP) for privacy-preserving RAG. PA-HDP first performs a prompt-aware risk hierarchy to dynamically assess privacy risks under different queries. It then applies adaptive sensitive entity replacement and exponential mechanism-based text selection to provide differentiated privacy protection while preserving semantic utility. By protecting only the content that is truly sensitive under a given query, PA-HDP minimizes unnecessary modifications to the retrieval corpus.
Extensive experiments on benchmark datasets demonstrate that PA-HDP significantly reduces privacy leakage while maintaining high retrieval quality, achieving a better privacy–utility trade-off than prior methods.
%Extensive experiments on medical dialogue and open-domain question answering benchmarks demonstrate that PA-HDP effectively mitigates both targeted and untargeted privacy attacks while maintaining high retrieval utility. Moreover, under rigorous differential privacy guarantees, our method significantly reduces semantic distortion and achieves a superior trade-off between privacy protection and model performance. These results highlight the importance of query-aware privacy protection and provide a practical framework for deploying RAG systems on sensitive data.
\end{abstract}

\begin{IEEEkeywords}
Retrieval-augmented generation; Differential privacy; Privacy protection; Large language models.
\end{IEEEkeywords}

\section{Introduction}

The emergence of Retrieval-Augmented Generation (RAG) technology has significantly mitigated the core challenges prevalent in large language models (LLMs) \cite{gao2023retrieval}, such as factual inaccuracies and content hallucinations\cite{huang2025survey}. By deeply integrating external knowledge bases with LLMs, RAG first retrieves the most relevant document segments from the knowledge repository based on the user's query prior to response generation. These retrieved results are then fed into the model as contextual information, thereby guiding the model to produce more accurate and reliable outputs. Currently, RAG technology has been widely adopted in high-stakes domains where information accuracy is paramount, including healthcare \cite{amugongo2025retrieval}, finance\cite{zhao2024optimizing}, and government services \cite{hindi2025enhancing}.

However, when the external knowledge base of a RAG system contains a large amount of sensitive information, such as patient medical records and personally identifiable information (PII), the retrieved text segments often contain highly sensitive content \cite{10zeng2024good}. This sensitive information may be leaked by large language models through various means, including direct verbatim reproduction, indirect inference, and gradual reconstruction, leading to uncontrollable privacy risks. Multiple studies \cite{8zeng2025mitigating, 25guan2025privacy, 26mu2026towards} have pointed out that RAG systems face severe privacy and security challenges when the retrieval process involves private data. Therefore, enhancing the privacy protection capabilities of RAG systems and preventing the leakage of retrieved data are crucial for preventing unauthorized access and misuse of data \cite{4grislain2025rag}.

Differential Privacy (DP) \cite{dwork2025differential} offers a mathematically rigorous solution to this challenge. By injecting carefully calibrated noise into the data, DP guarantees that an adversary cannot determine with high confidence whether a specific individual's record is present in the dataset, even if they have access to all other records. 
Several existing works have leveraged DP to protect the privacy of retrieved contexts in RAG systems. Koga et al. \cite{1koga2025privacypreserving} proposed selective privacy budget allocation to sensitive tokens to support accurate long-form generation under constrained budgets. Grislain \cite{4grislain2025rag} further demonstrated the feasibility of applying DP to token-level generation in RAG. Beyond token-level protection, some works have also explored document-level privacy mechanisms\cite{5hemmat2025vague}. Mori et al. \cite{9mori2025differentially} proposed a differentially private synthetic text generation approach that replaces sensitive documents with DP-compliant synthetic alternatives, mitigating leakage at the data source.

The aforementioned studies have provided valuable insights into privacy protection for retrieved contexts in RAG systems. 
However, these methods share a fundamental limitation: they are all based on the document-level static privacy risk assumption, which presumes that all retrieved documents carry an equal risk of privacy leakage. This overlooks the critical fact that the privacy leakage risk of the same document is highly dependent on the specific content of the user's query (\textit{prompt}). For instance, for the same internal company document, a query asking to "summarize the company's cultural philosophy" poses almost no privacy risk, while a query requesting to "list the names and contact information of all department employees" will lead to severe privacy leakage. \textbf{This implies that the privacy protection process in RAG systems should essentially be a query-driven dynamic process rather than the document-level static process currently widely adopted.}

To address the inherent limitations of the aforementioned static privacy protection methods, this paper proposes a Prompt-Aware Dynamic Hierarchical Differential Privacy (PA-HDP) protection method. In particular, this method introduces a prompt-driven fine-grained privacy risk assessment mechanism into the retrieval phase of RAG systems. 
First, it performs sentence-level sensitive semantic recognition on the retrieved contexts based on the user's query (\textit{prompt}), and divides the text segments into different privacy risk levels according to the type of sensitive information, the severity of leakage harm, and the relevance to the query intent. 
Then, for text content at different risk levels, we design an adaptive candidate selection strategy based on the exponential mechanism, where different privacy budgets are assigned according to the risk level. Specifically, a diverse candidate set is first generated through semantically equivalent replacement of sensitive entities. The exponential mechanism then selects a protected text from the candidate set that satisfies differential privacy constraints while preserving the highest possible semantic fidelity.
This method can achieve an optimal trade-off between privacy security and model generation utility without compromising the semantic coherence and information integrity of the original context. 
Our experimental results show that using our algorithm can achieve comparable performance with using original data while substantially reducing the associated privacy risks.

Our contributions are summarized as follows :
\begin{itemize} 
    \item[1.] \textbf{An important yet underexplored research question has been raised.} We point out the prevalent flaw of document-level static risk assumption in RAG privacy protection and reveal the query-dependent dynamic privacy leakage problem.
    \item[2.] \textbf{A Prompt-Aware Dynamic Hierarchical Differential Privacy framework (PA-HDP) is proposed.}
    We propose a Prompt-Aware Dynamic Hierarchical Differential Privacy protection framework. The framework integrates risk stratification, adaptive sensitive entity replacement, and exponential mechanism-based text selection to provide differentiated privacy protection while preserving semantic utility.
    \item[3.] \textbf{The experiments verified the effectiveness of the proposed method.}
    Extensive experimental results demonstrate that, under the premise of strictly satisfying the mathematical constraints of differential privacy, the PA-HDP method can significantly reduce the semantic loss of generated content and achieves a better trade-off between privacy and model utility.

\end{itemize}

\section{Related work}

Retrieval-Augmented Generation (RAG) improves the factuality of large language models but introduces new privacy risks, as sensitive information contained in retrieved documents may be exposed through generated responses. Ensuring strong privacy guarantees is therefore critical, with differential privacy emerging as a principled and widely adopted framework for mitigating information leakage. Existing privacy-preserving RAG approaches can be broadly categorized into three directions: (1) token-level generation protection based on differential privacy, (2) protection based on external databases, and (3) other privacy protection methods.

\subsection{Privacy protection for Token-level Generation}

These methods directly apply differential privacy mechanisms during the LLM decoding phase, injecting noise into token probability distributions or logits to limit the leakage of retrieved information. Early work \cite{1koga2025privacypreserving}  explored the fundamental challenges of differentially private RAG, namely how to generate accurate long text under limited privacy budgets, and proposed a strategy that allocates budgets only to sensitive tokens. InvisibleInk \cite{2vinod2026invisibleink} interprets next-token sampling as an exponential mechanism over logits, reducing privacy costs by isolating sensitive information in public versus private logits, and samples from a superset of top‑k private tokens to improve utility, achieving an 8‑fold reduction in computational cost under the same privacy level. Privacy‑Aware Decoding (PAD) \cite{3wang2025privacy} is a lightweight inference-time defense that adaptively adds calibrated Gaussian noise to token logits, combining confidence filtering with efficient sensitivity estimation, protecting only high-risk tokens in a model-agnostic manner with very low computational overhead. A complementary work \cite{4grislain2025rag} further validates the feasibility of differential privacy based token generation in private RAG scenarios. PEARL \cite{6joopearl} proposes an adaptive differential privacy decoding framework based on confidence gaps and entropy regulation, dynamically allocating privacy budgets at both token and sentence levels, focusing protection on personally identifiable information fragments.

\subsection{Privacy protection for External Databases}

The core objective of these methods is to directly protect the retrieved original documents so that LLMs or attackers cannot access sensitive content. VAGUE‑Gate \cite{5hemmat2025vague} proposes a plug‑and‑play LDP protection mechanism that performs budget-constrained token filtering and random rewriting perturbation on retrieved text, hiding the content of original documents while satisfying $\varepsilon$‑LDP. LPRAG \cite{7he2025mitigating} applies LDP perturbation only to private entities (words, numbers, or phrases) within the text, identifying entity types and allocating adaptive privacy budgets, using different perturbation mechanisms for each type. Differentially private synthetic text \cite{9mori2025differentially} avoids leakage at the source by generating DP-compliant synthetic documents to replace sensitive original data. DP‑KSA \cite{12tang2026differentially} adopts keyword semantic compression and a propose‑test‑release mechanism to extract high‑frequency keywords from retrieved documents in a differentially private manner, augmenting only the compressed keywords into the prompt. Text‑DP \cite{13yu2024textual} proposes a textual differential privacy paradigm for context‑aware inference, achieving sensitive information anonymization through differential embedding hashing. RemoteRAG \cite{14cheng2025remoterag} formalizes the privacy protection problem for cloud‑based RAG services, introducing an $(n,\varepsilon)$-DistanceDP mechanism to perturb queries and document embeddings, narrowing the retrieval scope based on perturbed embedding vectors. ppRAG \cite{24ye2025efficient} combines distance‑preserving symmetric encryption (CAPRISE) with the DistanceDP mechanism to enable similarity computation over encrypted embeddings in untrusted cloud environments, ensuring that plaintext documents are never exposed.

Another class of methods does not rely on differential privacy, protecting original documents through synthetic data, knowledge distillation, embedding shifting, or model unlearning. Pure synthetic data methods \cite{8zeng2025mitigating} replace the retrieval database entirely with artificially generated text that contains no sensitive information. Parametric RAG \cite{11chen2025privacy} adopts knowledge distillation to convert each document into a parameterized LoRA module, requiring no access to original documents during inference, with document content masked by special tokens. PRESS \cite{15he2025press} fine‑tunes the model via embedding space shifting, enabling RAG inference without needing to access original sensitive documents. Additionally, \cite{19wang2025learning} studies methods to erase private knowledge from multiple documents, fine‑tuning the model to actively forget sensitive information from retrieved documents.

\subsection{Privacy protection for Training stage}

Beyond the direct protections targeting the generation process and retrieved documents, researchers have also worked on training‑stage privacy, multi‑query privacy accounting, and system security evaluation. Regarding user‑level privacy protection during the training phase of RAG systems, Charles et al.\cite{16charles2025learning} designs two scalable DP‑SGD variants (ELS example‑level sampling and ULS user‑level sampling) and derives a tight user‑level privacy accounting framework, adaptively selecting user group sizes between privacy and computational resources via data‑driven heuristics; Boenisch et al. \cite{17boenisch2023have} further proposes a personalized differential privacy method that assigns different sampling rates or clipping norms to different users to achieve individualized protection strength. In multi‑turn interaction scenarios, continuous privacy budget consumption is a key challenge for practical deployment. Private‑RAG (MURAG) \cite{18wu2025private} shifts privacy accounting from the query level to the document level through document access frequency‑based individual privacy filtering and adaptive threshold release, effectively controlling privacy accumulation in multi‑turn interactions. Beyond Per‑Query Privacy \cite{20wu2025beyond} systematically studies privacy budget consumption in online interactions, designing both fixed‑threshold and adaptive‑threshold private retrieval algorithms, making privacy consumption related to document retrieval frequency rather than total query count, achieving more efficient document selection and multi‑turn QA generation under $\varepsilon$‑DP guarantees.

%Furthermore, Comparative Analysis of RAG Frameworks \cite{21sah2026comparative} provides a systematic review and trade‑off analysis of frameworks such as LangChain and LlamaIndex, as well as mechanisms including differential privacy, secure multi‑party computation, and homomorphic encryption in private QA scenarios. \cite{10zeng2024good} empirically demonstrates the vulnerability of RAG systems to leaking private retrieval databases from an attack perspective. \cite{22arzanipour2025rag} constructs a formal threat model for RAG, defining specific attack vectors such as document‑level membership inference and content leakage, providing a unified theoretical benchmark and analytical framework for the various defense methods discussed above.

\section{Preliminary}
%For fixed privacy budget $\epsilon$ and $\delta$, 

Differential privacy is a standard paradigm for protecting privacy of individuals. It requires that changing one entry can create only a small change of the output distributions. %records in a statistical analyse. It formalizes the requirement that the addition or removal of a record does not change the probability of the output of a randomized mechanisms by much.

\begin{definition}
    (Differential Privacy (DP)~\cite{dwork2006differential}). A randomized mechanism $\mathcal{M}$: $\mathcal{X}^{n} \to \mathcal{R}$ satisfies $\left ( \epsilon ,\delta  \right )$-differential privacy  if for any two adjacent inputs $\mathcal{D}$, $\mathcal{D}'$ %(differing in at most the data of one individual) 
    and for any subset of outputs $S\subseteq \mathcal{R}$, it holds that 
$$
\Pr[\mathcal{M}\left ( \mathcal{D} \right )\in S] \leq e^{\epsilon }P_{r}[\mathcal{M}\left ( \mathcal{D}' \right )\in S]+\delta ~.
$$
%where $\epsilon$ is the parameter of privacy protection budget, which is one of the key factors to measure the level of privacy protection. The smaller the parameter $\epsilon$ represents the higher the level of privacy protection.
%If $\delta=0$, it is also said as pure differential privacy.
\end{definition}

\begin{lemma}[Laplace Mechanism \cite{dwork2006calibrating}]
Given a function $f:\mathcal{X}^{n}\rightarrow\mathbb{R}$ with sensitivity $\Delta f = \max_{\mathcal{D},\mathcal{D}' \text{ adjacent}} \big|f(\mathcal{D})-f(\mathcal{D}')\big|$, the Laplace mechanism outputs
\[
\mathcal{M}(\mathcal{D}) = f(\mathcal{D}) + \eta,
\]
where $\eta \sim \mathrm{Lap}\left(\frac{\Delta f}{\epsilon}\right)$.
Then $\mathcal{M}$ satisfies $\epsilon$-differential privacy.
\end{lemma}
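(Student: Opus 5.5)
To prove the Laplace mechanism lemma, we compare the output densities of $\mathcal{M}(\mathcal{D})$ and $\mathcal{M}(\mathcal{D}')$ pointwise and then integrate over an arbitrary measurable set $S$. Write $b=\Delta f/\epsilon$. The random variable $\mathcal{M}(\mathcal{D})$ has density
\[
p_{\mathcal{D}}(z)=\frac{1}{2b}\exp\!\left(-\frac{|z-f(\mathcal{D})|}{b}\right)
\]
on $\mathbb{R}$, and similarly $p_{\mathcal{D}'}(z)$ with $f(\mathcal{D}')$ in place of $f(\mathcal{D})$. The goal is to show $p_{\mathcal{D}}(z)\le e^{\epsilon}p_{\mathcal{D}'}(z)$ for every $z\in\mathbb{R}$ and every adjacent pair $\mathcal{D},\mathcal{D}'$. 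The statement with $\delta=0$ then follows immediately.

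For the pointwise bound, we take the ratio of densities. The normalizing constants cancel, leaving
\[
\frac{p_{\mathcal{D}}(z)}{p_{\mathcal{D}'}(z)}=\exp\!\left(\frac{|z-f(\mathcal{D}')|-|z-f(\mathcal{D})|}{b}\right).
\]
The reverse triangle inequality gives $|z-f(\mathcal{D}')|-|z-f(\mathcal{D})|\le |f(\mathcal{D})-f(\mathcal{D}')|$. By the definition of $\Delta f$ as a maximum over adjacent pairs, this is at most $\Delta f$. Hence the ratio is at most $\exp(\Delta f/b)=e^{\epsilon}$. This step is the heart of the argument, and the only place where sensitivity enters. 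It works because the exponent of the Laplace density is $1$-Lipschitz in its location parameter, once scaled by $1/b$.

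Finally, for any measurable $S\subseteq\mathbb{R}$ we integrate the pointwise inequality:
\[
\Pr[\mathcal{M}(\mathcal{D})\in S]=\int_S p_{\mathcal{D}}(z)\,dz\le e^{\epsilon}\int_S p_{\mathcal{D}'}(z)\,dz=e^{\epsilon}\Pr[\mathcal{M}(\mathcal{D}')\in S].
\]
This is exactly the condition in the definition of differential privacy with $\delta=0$.

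There is no real obstacle. The only points needing care are the degenerate case $\Delta f=0$ and the requirement $\Delta f<\infty$. If $\Delta f=0$, then $f$ takes the same value on all adjacent pairs, so the outputs are identically distributed for any noise choice, and the claim is trivial. The case $\Delta f=\infty$ is implicitly excluded by the statement. If the same argument is needed later for vector-valued $f$ with $\ell_1$ sensitivity, the extension is straightforward: one uses independent coordinates and the product of densities.
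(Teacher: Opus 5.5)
Your proof is correct: the pointwise density-ratio bound via the reverse triangle inequality, $\exp\bigl((|z-f(\mathcal{D}')|-|z-f(\mathcal{D})|)/b\bigr)\le e^{\Delta f/b}=e^{\epsilon}$, followed by integration over an arbitrary measurable $S$, is the standard argument from the cited work of Dwork et al. The paper gives no proof of its own; it cites that source, and its appendix only applies the lemma with scale $2/\epsilon$. Your separate treatment of the degenerate case $\Delta f=0$, where the density argument breaks down, correctly closes the one loose end.
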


\begin{lemma} (Exponential Mechanism \cite{mcsherry2007mechanism}) 
For a dataset $\mathcal{D}$, let $u(\mathcal{D},r)$ be a utility function with respect to output $r\in\mathcal{R}$, and let $\triangle u$ be the sensitivity of $u$.
The exponential mechanism $\mathrm{M}(\mathcal{D},u,\mathcal{R})$ selects an element $r\in\mathcal{R}$ with probability proportional to
\[
\exp\left( \frac{\epsilon \, u(\mathcal{D},r)}{2\triangle u} \right),
\]
and this mechanism satisfies $\epsilon$-differential privacy.

\end{lemma}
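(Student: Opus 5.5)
To show that the exponential mechanism is $\epsilon$-differentially private, I will fix two adjacent inputs $\mathcal{D},\mathcal{D}'$ and an arbitrary output $r\in\mathcal{R}$. The plan is to bound the ratio $\Pr[\mathrm{M}(\mathcal{D},u,\mathcal{R})=r]/\Pr[\mathrm{M}(\mathcal{D}',u,\mathcal{R})=r]$ by $e^{\epsilon}$. I will treat $\mathcal{R}$ as finite (as in the candidate-set use later in the paper), so that probabilities are explicit normalized weights. The bound for arbitrary $S\subseteq\mathcal{R}$ then follows by summing the pointwise inequality over $r\in S$, which gives the definition of DP with $\delta=0$.

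Write the output probability as
\[
\Pr[\mathrm{M}(\mathcal{D})=r]=\frac{\exp\left(\frac{\epsilon u(\mathcal{D},r)}{2\triangle u}\right)}{\sum_{r'\in\mathcal{R}}\exp\left(\frac{\epsilon u(\mathcal{D},r')}{2\triangle u}\right)}.
\]
The ratio of the two probabilities then splits into two factors: a numerator ratio and the inverse ratio of the normalizers. The sensitivity is $\triangle u=\max_{r}\max_{\mathcal{D},\mathcal{D}'\text{ adjacent}}|u(\mathcal{D},r)-u(\mathcal{D}',r)|$. By this definition, the numerator ratio is $\exp\bigl(\epsilon(u(\mathcal{D},r)-u(\mathcal{D}',r))/(2\triangle u)\bigr)\le e^{\epsilon/2}$.

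For the normalizer factor, I will bound each summand termwise. Since $u(\mathcal{D}',r')\le u(\mathcal{D},r')+\triangle u$, each term of the $\mathcal{D}'$ normalizer is at most $e^{\epsilon/2}$ times the corresponding term of the $\mathcal{D}$ normalizer. Summing these bounds gives $\sum_{r'}\exp(\cdot\,\mathcal{D}')\le e^{\epsilon/2}\sum_{r'}\exp(\cdot\,\mathcal{D})$, so the inverse-normalizer factor is also at most $e^{\epsilon/2}$. Multiplying the two factors yields the $e^{\epsilon}$ bound.

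I do not expect any step to be a serious obstacle. The point needing care is the normalizer: it is the reason for the factor $2$ in the exponent, since the numerator and normalizer each consume $\epsilon/2$. I will also require that $\triangle u$ is taken uniformly over all $r$ so that the termwise bound is valid. If $\mathcal{R}$ were infinite, I would replace the sums by integrals against a base measure and assume the normalizer is finite; the same argument then goes through verbatim.
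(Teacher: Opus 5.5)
Your proof is correct and is the standard McSherry--Talwar argument: the numerator ratio contributes at most $e^{\epsilon/2}$, the normalizer ratio contributes at most $e^{\epsilon/2}$ (because $\triangle u$ bounds the change uniformly over all $r$), and summing the pointwise bound over $r\in S$ gives $\epsilon$-differential privacy. The paper does not prove this lemma; it only cites it, so this is the intended argument. The paper's appendix proof of the Semantic Exponential Mechanism establishes only the sensitivity bound $\triangle u\le 1$ and then relies implicitly on exactly this ratio argument, with $\triangle u$ replaced by the Lipschitz bound times $d_{\phi}$.
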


To adapt differential privacy to textual data, we introduce Semantic Metric Differential Privacy (Definition~\ref{de:sdp}). 

\begin{definition}\label{de:sdp}[$(\epsilon,\delta)$-Semantic Metric Differential Privacy]
Let $\mathcal{X}$ denote the text space, where each text
$x\in\mathcal{X}$ is represented as a token sequence $x=(w_1,\dots,w_n)$.
Define an edit-based adjacency relation $x\sim x'$ such that
$x'$ can be obtained from $x$ by a single token- or phrase-level
insertion, deletion, or substitution operation, i.e., $d_{\mathrm{edit}}(x,x')=1.$
Let $\phi:\mathcal{X}\rightarrow \mathbb{R}^d$ be a semantic embedding function, and define the semantic metric
\[
d_{\phi}(x,x')
=
\|\phi(x)-\phi(x')\|_2.
\]

A randomized mechanism $M:\mathcal{X}\rightarrow\mathcal{Y}$ is said to satisfy $(\epsilon,\delta)$-Semantic Metric Differential Privacy if the following conditions hold:
For any adjacent texts $x\sim x'$ and any measurable set
    $S\subseteq\mathcal{Y}$,
    \[
    \Pr[M(x)\in S]
    \le
    e^{\epsilon d_{\phi}(x,x')}
    \Pr[M(x')\in S]
    +\delta.
    \]

\end{definition}

Based on this definition, we further propose the Semantic Laplace Mechanism and the Semantic Exponential Mechanism, which serve as the core privacy-preserving components of our method.

\section{Method}

This paper proposes a Prompt-Aware Dynamic Hierarchical Differential Privacy (PA-HDP) protection method. This method decomposes the entire privacy protection pipeline into three tightly coupled stages: the normalization and definition of privacy units, prompt-aware risk assessment, hierarchical DP protection.

\begin{algorithm}[htbp]
\caption{PA-HDP}
\label{alg:pahdp}

\textbf{Input:} Prompt $q$, retrieval document $D= \left \{d_1, \dots, d_n \right \}$, retrieval number $K$, size of candidate set $m$,  privacy budget $\epsilon$, and parameters $\lambda, \alpha \in [0,1]$.

\begin{algorithmic}[1]
\STATE //Retrieval and text preprocessing//
\STATE Encode document via embedding model to obtain dense vectors $\left \{E(d_1), \dots, E(d_n) \right \}$
\STATE Build vector index database for efficient similarity search
\STATE Encode user query to obtain $E(q)$
\FOR{each document chunk \(d_i\) in vector index}
    \STATE Compute cosine similarity for ($E(d_i), E(q)$)
\ENDFOR
\STATE Select top-\(K\) most relevant contexts $\left \{s_1, \dots, s_K \right \}$
\STATE Divide text segment $s_{i,j}$ for $s_i$.

\STATE //prompt-aware risk assessment//
\FOR{each text segment $s_{i,j}$}
    \STATE Encode text segment to obtain $E(s_{i,j})$
    \STATE Compute risk-aware semantic similarity $S_{rel}$
    \STATE Obtain sensitive field set \(E_{i,j} = \{e_{i,j}^1, \dots, e_{i,j}^m\}\) 
    \STATE Compute field sensitivity $S_{sen}$ 
    \STATE Compute risk score $S_{total}(q, s_{i,j})$ by Eq.\ref{eq:risk-score}
    \STATE Obtain noisy risk score 
    $$S^\ast_{total}(q, s_{i,j}) = S_{total}(q, s_{i,j}) + Lap(\frac{2}{\epsilon})$$
    \STATE Divide privacy risk levels according to $S^\ast_{total}(q, s_{i,j})$
\ENDFOR

\STATE //Hierarchical DP protection//
\FOR{each text segment $s_{i,j}$}
    \IF{$S^\ast_{total}(q, s_{i,j}) \ge \tau_1$}
        \STATE Generate candidate texts $C=\left \{s_{i,j}^{1}, \dots, s_{i,j}^{m} \right \}$ by replacing entities
        \STATE Use the exponential mechanism to privately output the optimal $s_{i,j}^\ast \in C$ with $\epsilon_H/\epsilon_M$
    \ENDIF
\ENDFOR
\STATE Replace $s_{i,j}$ with $s^\ast_{i,j}$ to get $\left \{s^\ast_1, \dots, s^\ast_K \right \}$

\RETURN Private contexts $\left \{s^\ast_1, \dots, s^\ast_K \right \}$

\end{algorithmic}
\end{algorithm}

Specifically, the first stage performs standardized data preprocessing and defines the sentence as the minimum granularity unit for privacy protection.
The second stage performs fine-grained privacy risk assessment and hierarchical classification on these retrieved segments by analyzing their semantic correlation with user queries. 
In the final stage, the framework first generates semantically equivalent candidate texts by applying differentiated sensitive entity replacement strategies. It then performs hierarchical differential privacy protection by selecting the optimal sanitized text from the candidate set through the exponential mechanism, while rigorously satisfying the formal constraints of differential privacy.
The complete pseudo-code is shown in Algorithm~\ref{alg:pahdp}.

Furthermore, Fig.~\ref{fig:fig1} provides an example illustrating the pipeline of our privacy protection algorithm.

\begin{figure*}[t]
    \centering
    \includegraphics[width=0.85\textwidth]{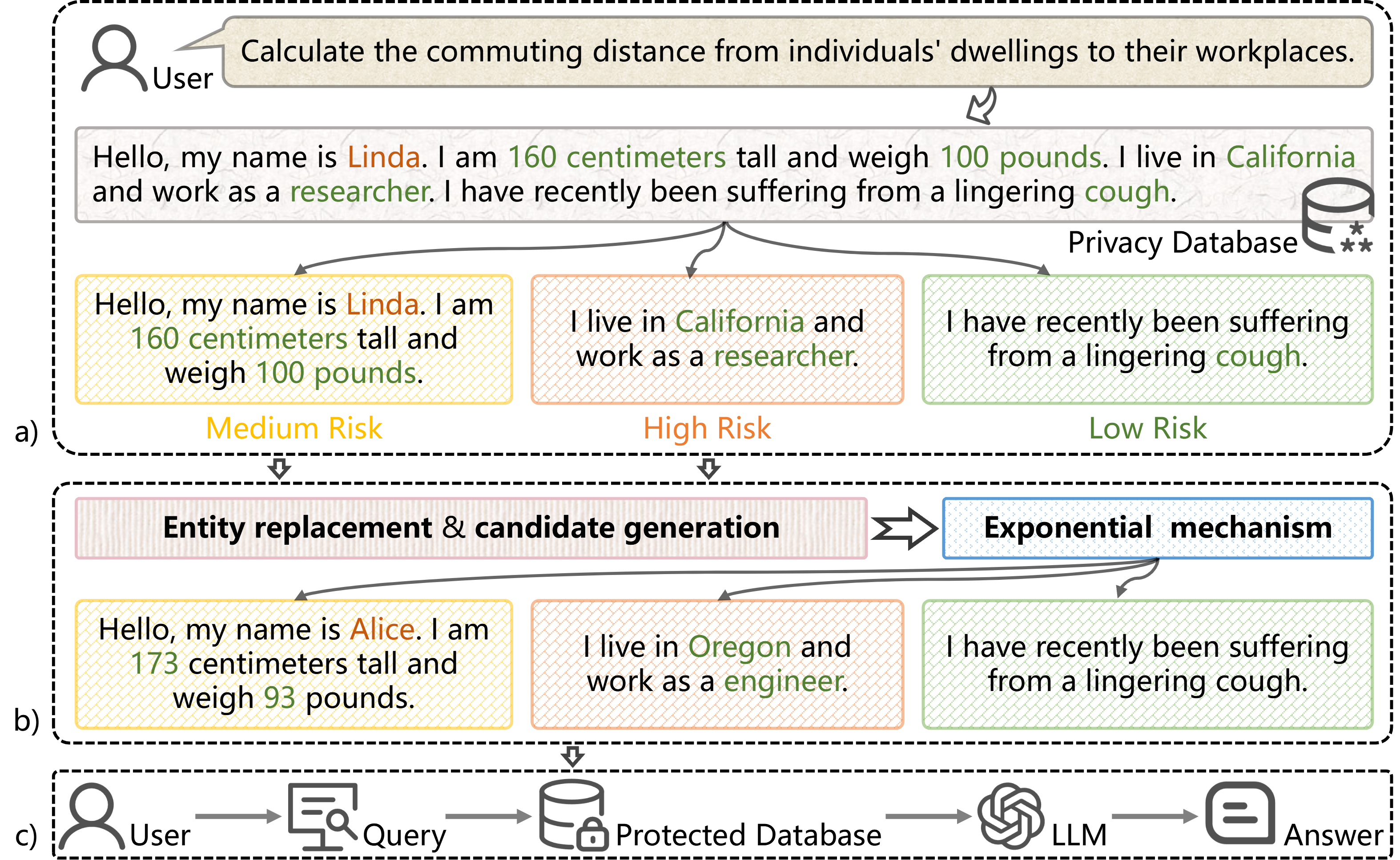}
    \caption{Pipeline of PA-HDP. a) Prompt-aware risk assessment stage. b) Hierarchical differential privacy protection stage. c) Full retrieval-augmented generation pipeline.}
    \label{fig:fig1}
\end{figure*}

%This paper employs a dense vector-based semantic retrieval approach. First, all documents $D= \left \{d_1, \dots, d_n \right \}$ in the external knowledge base are chunked, encoded $(E(\cdot))$, and used to construct a vector index $E(d_i)$. Upon receiving a user query q, the system converts it to a dense vector $E(q)$ using the same embedding model, computes cosine similarity \(\text{cos}(E(q), E(d_i))\) for semantic matching, and retrieves the top-K relevant contexts $\left \{s_1, \dots, s_K \right \}$. A fixed K threshold effectively controls context length, ensures necessary knowledge coverage, avoids semantic redundancy, and restricts privacy protection to the minimal query-relevant candidate set.

\subsection{Retrieval and Text Preprocessing}

In Retrieval-Augmented Generation (RAG), the retrieval stage determines the external knowledge accessible to the large language model (LLM) and serves as the primary entry point for privacy leakage. Given an external knowledge base $D=\{d_1,\dots,d_n\}$, each document is first chunked, embedded using an encoder $E(\cdot)$, and indexed as dense vectors. For a user query $q$, the system computes its embedding $E(q)$ and retrieves the top-$K$ text segments $\{s_1,\dots,s_K\}$ based on cosine similarity. The fixed top-$K$ strategy limits context length, reduces semantic redundancy, and confines privacy protection to query-relevant content.

To improve processing quality, the retrieved text is further preprocessed through sentence segmentation and text normalization, including punctuation-based boundary detection and the removal of redundant symbols, spaces, and line breaks. These operations provide standardized inputs for subsequent sensitive entity identification and differential privacy protection.

After preprocessing, each sentence $s_{i,j}$ is defined as a \textit{Basic Privacy Unit (PU)}, where $s_{i,j}$ denotes the $j$-th sentence in the $i$-th retrieved context. A privacy unit is the minimum processing unit for sensitive entity recognition, sensitivity assessment, candidate generation, and differential privacy perturbation. Each PU preserves semantic completeness while maintaining an independent privacy level and budget.

The independent design of privacy units enables fine-grained privacy protection and naturally satisfies differential privacy composition. It allows stronger protection for high-risk content while preserving the utility of low-risk text, thereby achieving a better privacy--utility trade-off.

\subsection{Prompt-aware risk assessment}
After completing retrieval and text preprocessing, to achieve fine-grained differentiated privacy protection, it is necessary to perform accurate localization, quantitative assessment, and hierarchical classification of sensitive information in the retrieved segments. In RAG scenarios, only sensitive information relevant to the current user query constitutes genuine privacy leakage risks. Even if query-irrelevant sensitive content exists in the retrieval results, it is difficult for Large Language Models (LLMs) to attend to and output it, thus requiring no high-intensity protection.

To realize prompt-aware dynamic risk assessment, this paper proposes a two-dimensional risk quantification model. It separately calculates the risk-aware semantic relevance \(S_{rel}\) between the retrieved segment and the user query $q$, and the field sensitivity \(S_{sen}\) of the segment itself. The final comprehensive risk score \(S_{total}\) is obtained through weighted fusion, based on which the text is divided into different privacy risk levels.

\subsubsection{Risk-aware semantic similarity}

We design a risk-aware gating mechanism that restricts semantic relevance computation to cases where the user query explicitly involves potentially sensitive entities. Specifically, we first assess whether the query contains potentially sensitive entities (e.g., person names, locations, or organizations). This assessment determines whether the current query is likely to trigger privacy leakage in the retrieved documents.

Let the user query be \(q\), and the j-th segment of the retrieved context $s_i$ be \(s_{i,j}\).
We first apply an entity recognition module \(\mathcal{E}(\cdot)\) to extract entities from the query:

\[
\mathcal{E}(q) = \{e_1, e_2, ..., e_m\}.
\]

We then define a sensitive-entity indicator function:

\[
I(q) =
\begin{cases}
1, & \text{if } \mathcal{E}(q) \cap \mathcal{E}_{sens} \neq \emptyset \\
0, & \text{otherwise}
\end{cases}
\]
where \(\mathcal{E}_{sens}\) denotes a predefined set of sensitive entity types.

When \(I(q) = 1\), we compute semantic similarity between the query and each segment using the same embedding model \(E(\cdot)\) employed in the retrieval stage:

\[
\mathbf{v}_q = E(q), \quad \mathbf{v}_{s_{i,j}} = E(s_{i,j}),
\]

\[
S_{rel}(q, s_{i,j}) = \frac{\mathbf{v}_q \cdot \mathbf{v}_{s_{i,j}}}{\|\mathbf{v}_q\| \cdot \|\mathbf{v}_{s_{i,j}}\|}.
\]

When \(I(q) = 0\), we conservatively assign $S_{rel}(q, s_{i,j}) = 0$, where \(\cdot\) denotes the vector dot product and \(\|\cdot\|\) denotes the L2 norm. Thus, \(S_{rel}(q, s_{i,j}) \in [0, 1]\).

Intuitively, when such entities are present, a higher semantic similarity between the query and a retrieved segment indicates a higher probability that the segment may be leveraged by the generation model, thereby increasing the risk of privacy leakage. Conversely, when no sensitive entities are detected in the query, we assume negligible entity-driven leakage risk from direct semantic matching and set the relevance score to zero.

%This formulation transforms traditional uniform similarity computation into a **conditional privacy-aware relevance model**, improving robustness against spurious similarity signals while focusing the risk estimation on entity-sensitive query scenarios.

\subsubsection{Field sensitivity}

Field sensitivity measures the inherent leakage harm degree of sensitive information contained in a retrieved text segment. This paper adopts a hybrid strategy combining Named Entity Recognition (NER) and rule-based extraction to comprehensively extract sensitive fields from retrieved text segments, forming a sensitive field set \(\mathcal{E}_{i,j} = \{e_{i,j}^1, \dots, e_{i,j}^m\}\), where \(e_{i,j}^k\) denotes the k-th sensitive entity in the $s_{i,j}$ text segment.
The NER module is implemented based on a pre-trained semantic model, which can automatically recognize and locate unstructured sensitive entities such as personal names, locations, ages, and disease diagnoses in the text. These entities usually have the ability to directly or indirectly identify user identities and are the primary sources of privacy leakage. On this basis, the rule-based extraction module complements the recognition of highly structured sensitive information such as phone numbers, email addresses, ID card numbers, and passwords through regular expressions, keyword libraries, and sensitive pattern matching. Such information cannot be fully recognized solely by semantic features but has fixed formats and combination patterns. This hybrid strategy ensures that all content that may pose privacy risks is included in the assessment scope, avoiding protection failure caused by omissions.
Subsequently, we assign predefined risk weights \(w_k \in [0, 1]\) to different types of sensitive entities according to their leakage harm degrees. For example, high risk weights (\(w_k \in [0.7, 1.0]\)) are assigned to entities such as ID card numbers, phone numbers, personal names, and disease diagnoses, while low risk weights (\(w_k \in [0.1, 0.3]\)) are assigned to entities such as locations, dates, and common nouns. The field sensitivity of the j-th segment is defined as the maximum value of the risk weights of all sensitive entities in the text segment:

$$S_{sen}(s_{i,j}) = \max_{1 \leq k \leq m} w_k.
$$

Using the maximum value instead of the average value as the field sensitivity ensures that a text segment will be classified as high risk as long as it contains one high-risk sensitive entity, thereby avoiding the underestimation of high-risk content caused by the averaging effect.

%打分情况，高分低分，附录

\subsubsection{Risk assessment}
The semantic relevance and field sensitivity are weighted and fused to obtain the comprehensive risk score of the j-th segment:
\begin{equation}\label{eq:risk-score}
    S_{total}(q, s_{i,j}) = \lambda \cdot S_{rel}(q, s_{i,j}) + (1-\lambda) \cdot S_{sen}(s_{i,j}).
\end{equation}
Where \(\lambda \in [0, 1]\) is the balance coefficient used to adjust the relative importance of semantic relevance and field sensitivity in risk assessment. In our experiments, the optimal value \(\lambda=0.6\) is determined through grid search, indicating that semantic relevance has a slightly greater impact on the final risk than the inherent sensitivity of the field itself.

Furthermore, we introduce the semantic Laplace mechanism to achieve rigorous privacy protection, and yield the noisy risk scores $S^\ast_{total}(q, s_{i,j})$. (The details are presented in the appendix.)

Finally, text segments are divided into three privacy risk levels according to their comprehensive risk scores:

$$\text{RiskLevel}(s_{i,j}) = 
\begin{cases} 
\text{Low Risk}, & S^\ast_{total}(q, s_{i,j}) < \tau_1 \\
\text{Medium Risk}, & \tau_1 \leq S^\ast_{total}(q, s_{i,j}) < \tau_2 \\
\text{High Risk}, & S^\ast_{total}(q, s_{i,j}) \geq \tau_2 
\end{cases}
$$

where $\tau_1$ and $\tau_2$ denote the risk-level partition thresholds. In our experiments, we set \(\tau_1=0.3\) and \(\tau_2=0.7\) based on empirical observations. Text segments (sentences) with different risk levels will adopt differential privacy protection strategies with different budgets to achieve the differentiated privacy protection objective of "low risk, low protection; high risk, high protection".

\subsection{Hierarchical DP protection}

To achieve provable and fine-grained privacy protection, we propose the semantic metric differential privacy as the refined privacy definition, and adopt the exponential mechanism as the core privacy algorithm.
This section sequentially performs candidate text generation, utility function design, semantic exponential mechanism construction and hierarchical DP protection, and finally outputs secure and highly usable sanitized text under the premise of strictly satisfying differential privacy constraints.

\subsubsection{Candidate text generation}

For each retrieved text segment $s_{i,j}$, we construct a set of semantically equivalent and structurally consistent candidate texts, providing a secure candidate space for the exponential mechanism to select the optimal output under differential privacy constraints. The candidate text construction adopts a core strategy centered on sensitive entity replacement, generating multiple semantically equivalent valid candidate texts while strictly preserving the overall semantic structure and non-sensitive content of the original sentence.

Specifically, we pre-construct a large-scale sanitized entity corpus $R$ as the secure material source for sensitive information replacement. We employ the hybrid entity extraction strategy proposed in Section 3.2.2 (combining Named Entity Recognition (NER) and rule-based extraction) to extract all entities from the external knowledge base and perform category annotation. 

For each retrieved text segment, the system first detects all sensitive entities, including their positions and categories. Next, it randomly selects entities of the same category from the sanitized entity corpus $R$ to perform isomorphic replacement. Ultimately, the system produces $m$ semantically identical candidate segments, denoted as $C=\left \{s_{i,j}^{1}, \dots, s_{i,j}^{m} \right \}$. We set $m=50$ in our experiments.

\subsubsection{Utility function design}

To privately select the most appropriate candidate text from a candidate set while preserving semantic fidelity and reducing privacy leakage risk, we design a composite utility function for the exponential mechanism. The utility function jointly considers two objectives: semantic preservation and privacy safety.
Specifically, given a retrieved segment $s_{i,j} \in \mathcal{X}$ and a candidate text $s_{i,j}^v \in C$, the utility function is defined as:
$$ u(s_{i,j}, s_{i,j}^v) = - \alpha \cdot d_{\text{sem}}(s_{i,j}, s_{i,j}^v) - (1-\alpha) \cdot r(s_{i,j}, s_{i,j}^v) 
$$

where $d_{\text{sem}}(s_{i,j}, s_{i,j}^v)$ denotes the semantic distance between the original segment and the candidate text, $r(s_{i,j}, s_{i,j}^v)$ represents the privacy leakage risk of the candidate text, $\alpha >0$ is a trade-off parameter balancing semantic quality and privacy protection, and $D>0$ is a normalization constant used to stabilize the utility scale and control sensitivity.

To measure semantic preservation, both the original segment and the candidate text are encoded into dense semantic embeddings using a pretrained text embedding model $\phi$. The semantic distance is then computed using the Euclidean distance in the embedding space:
$$d_{\text{sem}}(s_{i,j}, s_{i,j}^v) = \|\phi(s_{i,j})-\phi(s_{i,j}^v)\|_2.
$$
A smaller semantic distance indicates that the candidate text preserves the semantic meaning of the original text more effectively. Therefore, the utility function assigns higher utility values to candidates with smaller embedding distances.

To evaluate privacy security, we first extract all sensitive entities present in each candidate text and form a set, denoted as \(E(s_{i,j}^v)\). Each element in $E(s_{i,j}^v)$ denotes a potentially sensitive entity or phrase, such as names, addresses, phone numbers, or medical identifiers. If no sensitive span is detected, the privacy risk $r(s_{i,j}, s_{i,j}^v)=0$. Otherwise, each sensitive entity is embedded into the same semantic space, and its semantic similarity to the original segment is measured using cosine similarity. 
The overall privacy leakage risk of the candidate text is defined as the maximum similarity among all remaining sensitive spans:
$$r(s_{i,j}, s_{i,j}^v) = \max_{e\in E(s_{i,j}^v)} \text{cos} (s_{i,j}, e).
$$
This worst-case formulation emphasizes the most privacy-sensitive residual entity, where even a single highly correlated sensitive span may lead to successful information recovery. In addition, the parameter $\alpha$ controls the relative importance of privacy protection versus semantic preservation: larger values favor candidates with lower privacy leakage risk, while smaller values place greater emphasis on semantic fidelity. In our experiments, we set $\alpha=0.5$, which was found to provide the best trade-off between semantic fidelity and privacy protection.

It is worth noting that entity-level privacy risk assessment is also performed on the retrieved progress. However, the two stages serve different purposes: the former evaluates the privacy leakage risk introduced by the prompt itself, whereas the latter measures the residual privacy risk of candidate texts with respect to the original segment.
Additionally, to properly bound the semantic distance in the embedding space, all embedding vectors are normalized to unit norm, i.e., $||\phi(s)||_2 \le 1$. This normalization constrains the embedding space to a bounded domain, ensuring numerical stability and facilitating the derivation of the utility sensitivity bound under the semantic metric.

\subsubsection{Semantic exponential mechanism construction}
In this section, we first propose the Semantic Metric Differential Privacy, providing a theoretical basis for privacy protection methods.

Based on the above utility function, the exponential mechanism selects a candidate text according to the following probability distribution:

\begin{lemma}[Semantic Exponential Mechanism] \label{lem:exp}
Let $\mathcal C$ be a finite candidate text set, and let
\[
u(s_{i,j}, s_{i,j}^v)
=
- \alpha \cdot d_{\mathrm{sem}}(s_{i,j}, s_{i,j}^v)
- (1-\alpha) \cdot r(s_{i,j}, s_{i,j}^v),
\]
where $\alpha \ge 0$.
The Semantic Exponential Mechanism selects an output
$s_{i,j}^{\ast}\in\mathcal C$ according to the probability distribution
\[
\Pr[M(s_{i,j})=s_{i,j}^{\ast}]
=
\frac{
\exp\left(
\frac{\epsilon u(s_{i,j},s_{i,j}^{\ast})}
{2}%\Delta_u
\right)
}{
\sum\limits_{s_{i,j}^v\in\mathcal C}
\exp\left(
\frac{\epsilon u(s_{i,j},s_{i,j}^v)}
{2}
\right)
}.
\]

Then $M$ satisfies $(\epsilon,0)$-Semantic Metric Differential Privacy.
\end{lemma}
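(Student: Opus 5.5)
The plan is the classical exponential-mechanism argument of McSherry and Talwar, with the global sensitivity $\Delta u$ replaced by a \emph{metric} Lipschitz bound. Concretely, I will show that for every fixed candidate $c\in\mathcal C$ and every pair of adjacent texts $x\sim x'$,
\[
|u(x,c)-u(x',c)|\le d_{\phi}(x,x').
\]
Once this holds, the factor $1/2$ in the exponent (the normalisation with $\Delta_u$ absorbed) yields the multiplicative bound $e^{\epsilon d_\phi(x,x')}$ with $\delta=0$. Throughout I treat $\mathcal C$ as a fixed finite set common to $x$ and $x'$, as the statement does. I will also use $\alpha\in[0,1]$, as in Algorithm~\ref{alg:pahdp}, together with the unit-norm normalisation $\|\phi(\cdot)\|_2\le 1$ stated at the end of the utility design subsection.

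\textbf{Step 1: the semantic-distance term.} By the reverse triangle inequality in $\mathbb{R}^d$,
\[
\bigl|\,\|\phi(x)-\phi(c)\|_2-\|\phi(x')-\phi(c)\|_2\,\bigr|\le \|\phi(x)-\phi(x')\|_2 = d_\phi(x,x').
\]
So $d_{\mathrm{sem}}(\cdot,c)$ is $1$-Lipschitz with respect to $d_\phi$.

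\textbf{Step 2: the risk term.} Here I interpret $\cos(x,e)$ as $\langle\phi(x),\phi(e)\rangle$, which is legitimate because the embeddings are unit-normalised. For each fixed entity $e$, Cauchy--Schwarz with $\|\phi(e)\|_2\le 1$ gives
\[
|\cos(x,e)-\cos(x',e)|\le d_\phi(x,x').
\]
The entity set $E(c)$ depends only on the candidate, not on the input text. A maximum of finitely many $1$-Lipschitz functions is again $1$-Lipschitz, so $r(\cdot,c)$ is $1$-Lipschitz. In the degenerate case $E(c)=\emptyset$ we have $r\equiv 0$, and the bound holds trivially.

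Combining Steps 1 and 2 gives
\[
|u(x,c)-u(x',c)|\le \alpha\, d_\phi(x,x')+(1-\alpha)\,d_\phi(x,x')=d_\phi(x,x').
\]

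\textbf{Step 3: the exponential-mechanism ratio.} Write $d=d_\phi(x,x')$. The utility bound gives $\exp(\epsilon u(x,c)/2)\le e^{\epsilon d/2}\exp(\epsilon u(x',c)/2)$ for every $c$. Applying this to the numerator for the selected candidate bounds the numerator ratio by $e^{\epsilon d/2}$. Applying it term by term in the partition function, in the reverse direction, bounds the ratio of normalising sums by $e^{\epsilon d/2}$. Hence
\[
\Pr[M(x)=c]\le e^{\epsilon d}\,\Pr[M(x')=c]\quad\text{for all } c\in\mathcal C.
\]
Summing over $c\in S$ for any $S\subseteq\mathcal C$ gives the condition of Definition~\ref{de:sdp} with $\delta=0$.

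\textbf{Expected obstacle.} The delicate step is Step 2 and, more broadly, the modelling assumptions it rests on. One must read ``$\cos(s_{i,j},e)$'' as a function of $\phi(s_{i,j})$ for which the $1$-Lipschitz property holds. If unnormalised embeddings were used, the cosine would only be Lipschitz with a constant depending on lower bounds on the norms, and $\epsilon$ would have to be rescaled by that constant.

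A second obstacle is that in practice the candidate set is generated from $s_{i,j}$ itself. If $\mathcal C$ differed between $x$ and $x'$, the argument would break. I would therefore state explicitly that the lemma concerns the selection step with $\mathcal C$ held fixed, or equivalently that candidate generation is independent of the private input beyond entity categories.
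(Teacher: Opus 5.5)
Your proof is correct and follows the same route as the paper. You bound the utility's Lipschitz constant with respect to $d_\phi$ by $\alpha+(1-\alpha)=1$, using the reverse triangle inequality for $d_{\mathrm{sem}}$ and Cauchy--Schwarz on the maximum of inner products for $r$, and then plug $\Delta_u=1$ into the exponential mechanism. You are in fact somewhat more careful than the paper: you write out the numerator and partition-function ratio explicitly, and you state the assumptions $\alpha\le 1$ and a fixed candidate set $\mathcal C$, which the paper's argument relies on only implicitly.
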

The proof details are presented in the appendix.

\subsubsection{Hierarchical DP protection}

After determining the candidate text set and privacy mechanism, the reasonable allocation of the privacy budget is the key to ensuring the balance between the protection effect of differential privacy and the utility of text generation.
To provide differentiated privacy protection for texts with varying sensitivity levels, we first classify all texts into three privacy risk categories: high-risk, medium-risk, and low-risk. High-risk texts typically contain highly sensitive information, while medium-risk texts contain partially sensitive or identifiable content. Low-risk texts are not considered in the subsequent privacy budget allocation.

After risk classification, differential privacy budgets are allocated separately to high-risk and medium-risk texts. Let the total privacy budget be $\epsilon$, the number of high-risk texts be $N_h$, and the number of medium-risk texts be $N_m$. Since high-risk texts require stronger privacy protection, they are assigned smaller privacy budgets than medium-risk texts.

We adopt a risk-aware budget allocation strategy defined as:

\[
\varepsilon_H
=
\frac{\varepsilon}{2(N_H + \lambda N_M)}, \ \varepsilon_M
=
\gamma \varepsilon_H
\]

where $\varepsilon_H$ and $\varepsilon_M$ denote the privacy budgets for high-risk and medium-risk texts, respectively, and $\gamma > 1$ controls the protection gap between different risk levels. In our experiments, we set $\gamma = 2$, meaning that medium-risk texts receive twice the privacy budget of high-risk texts. This strategy enables stronger protection for highly sensitive texts while preserving better semantic utility for medium-risk texts.

\section{Experiment}
This section verifies the effectiveness and privacy protection performance of the proposed PA-HDP method through a series of experiments. We first describe the experimental setup in Section A. Subsequently, Sections B and C evaluate the utility and privacy outcomes of the proposed method. Furthermore, we conduct an ablation study in Section D to investigate how different privacy budgets and numbers of retrieved documents affect the performance of our method. Finally, we perform a comprehensive hyperparameter comparison in Section F to identify the optimal hyperparameter configuration.

\subsection{Experimental details}
In this section, we introduce the evaluation datasets, baseline algorithms, and experimental setup.
The overall experimental evaluation followed the experimental setup in work \cite{8zeng2025mitigating}.

\subsubsection{Datasets}
We evaluate the effectiveness of our proposed methods under two privacy-relevant experimental scenarios. The first scenario targets real-world medical dialogue monitoring, where we adopt the HealthcareMagic-101 dataset (comprising 200,000 doctor-patient medical dialogues) as the retrieval corpus. 

In the second scenario, following Huang et al. \cite{ex-huang2023privacy}, we consider a setting where private information is mixed with public data.  Specifically, we construct the Wiki-PII dataset by embedding personally identifiable information (PII) fragments extracted from the private Enron Mail dataset into each sample of the public WikiText-103 dataset. We then benchmark our methods on four standard open-domain question answering (ODQA) benchmarks: Natural Questions (NQ)\cite{nq-kwiatkowski2019natural}, TriviaQA (TQA)\cite{tqa-joshi2017triviaqa}, Web Questions (WQ) \cite{wq-berant2013semantic}, and CuratedTrec (CT)\cite{ct-baudivs2015modeling}. 
Full details on dataset construction and characteristics are provided in Appendix A.9.

\subsubsection{Baselines}
To evaluate the effectiveness of our method, we compare against four baselines. The first is \textbf{Paraphrasing} \cite{8zeng2025mitigating}, which uses an LLM to extract the most relevant parts of the original context without performing any semantic rewriting. 
The second is \textbf{ZeroGen} \cite{bs-ye2022zerogen}, a zero-shot learning framework that first generates a synthetic dataset from scratch using a large pre-trained language model guided by task-specific prompts, then trains a tiny task model (e.g., LSTM) on the synthesized data for efficient inference.
The third is \textbf{AttrPrompt} \cite{bs-yu2023large}, a data synthesis approach that generates training data using diversely attributed prompts (specifying attributes such as length, style, or topic), which yields more diverse and less biased synthetic datasets compared to simple class-conditional prompts.
The fourth is \textbf{SAGE} \cite{8zeng2025mitigating}, a two-stage synthetic data generation paradigm that mitigates the privacy issues by pure synthetic data. In this method, \textbf{Stage-1} refers to the outputs generated from attribute-based generation, while \textbf{Stage-2} represents the outputs of the complete SAGE pipeline.
The fifth is \textbf{LPRAG } \cite{7he2025mitigating}, a privacy-preserving RAG
framework with formal privacy guarantees based on local differential privacy (LDP).
Detailed descriptions of these baselines are provided in Appendix A.3.

In addition, we report several variants of our framework to analyze the contribution of each component. \textbf{0-shot} denotes generation without retrieval augmentation. \textbf{Origin} uses the original dataset directly as the retrieval corpus. 

\subsubsection{Experimental Setup}

For both utility and privacy evaluations, we employ two representative generation backbones: the closed-source \textit{GPT-3.5-Turbo} and the open-source \textit{Llama3-8B-Chat (L8C)}. Using both proprietary and open-source models allows us to assess the generality of our approach across different model families and deployment settings. We include L8C because its performance on the target tasks relies heavily on retrieval augmentation, making it suitable for evaluating the utility gains provided by RAG. In addition, both models have undergone safety alignment, enabling us to investigate privacy risks in standard RAG pipelines and evaluate the effectiveness of our proposed protection mechanism under realistic deployment conditions.

We adopt \textit{bge-large-en-v1.5} as the embedding model. Unless otherwise specified, similarity is computed using the Cosine similarity between normalized embeddings, GPT-3.5-Turbo is adopted as the generation model, only the top retrieved document ($k=1$) and a privacy budget of $\epsilon=5$ are provided for each query.

\subsection{Utility evaluation}

To evaluate the utility of synthetic data as retrieval corpora, we assess the quality of generated answers by comparing them against the corresponding ground-truth answers. We report two widely used text generation metrics, \textbf{ROUGE-L} and \textbf{BLEU}, as the primary evaluation measures.

ROUGE-L evaluates the longest common subsequence between the generated answer and the reference answer, capturing their similarity in terms of content coverage and sentence-level structure. BLEU measures the overlap of n-grams between the generated and reference answers, reflecting the lexical accuracy and fluency of the generated text. Higher ROUGE-L and BLEU scores indicate better agreement with the ground truth and, consequently, higher utility of the synthetic retrieval data.

\subsubsection{Results on Medical Dialog}

\begin{table}[t]
\label{tab:res-health}
\caption{Utility results on HealthCareMagic dataset}
\centering
\begin{tabular}{c c c c c}
\toprule
Method & \multicolumn{2}{c}{GPT-3.5} & \multicolumn{2}{c}{Llama3-8b-Chat} \\
\cmidrule(lr){2-3} \cmidrule(lr){4-5}
& BLEU & ROUGE-L & BLEU & ROUGE-L \\
\midrule
0-shot & 0.1143 & 0.1045 &      0.081 & 0.0765\\
Origin & 0.1193 & 0.1078 &     0.0846 & 0.0789\\
Paraphrase & 0.1481 & 0.1303 &  0.105 & 0.0952\\
ZeroGen & 0.1199 & 0.1050 &    0.085 & 0.0769\\
LPRAG &  0.3186 & 0.3310 & \textbf{0.1634} & 0.1596 \\
AttrPrompt & 0.1114 & 	0.0915 &   0.079 & 0.0670\\
Stage-1 & 0.1578 & 0.1306 &     0.114 & 0.0956\\
Stage-2 & 0.1594 & 0.1288 &     0.1192& 0.1160\\
Ours & \textbf{0.3275} & \textbf{0.3445}&        0.1512& \textbf{0.1726}\\
\bottomrule

\end{tabular}
\end{table}

In this section, we split the data into two parts:
$99\%$ of the data is used as the retrieval data, and the
remaining $1\%$ is used as the test data.
As shown in Table I, our method consistently achieves the best performance across both generation backbones. Compared with all baselines, it substantially improves both BLEU and ROUGE-L scores. In particular, under GPT-3.5, our method improves BLEU from 0.1594 (Stage-2) to 0.3275 and ROUGE-L from 0.1288 to 0.3445. Similar gains are observed for Llama3-8B-Chat. These results demonstrate that our approach is able to preserve the utility of retrieval data while maintaining high answer quality.

The superior performance can be attributed to the hierarchical protection strategy adopted in our framework. Instead of rewriting or regenerating entire documents, our method first identifies sensitive content and applies protection only where necessary. Most non-sensitive information is preserved in its original form, retaining the semantic integrity and knowledge content of the source documents. By minimizing unnecessary modifications and restricting transformations to a small subset of sensitive spans, the protected retrieval corpus remains highly faithful to the original data, leading to significantly better performance than methods that rely on extensive rewriting or synthetic generation.

%表明我们的方法层次化保护敏感信息，最大限度保护原始文档不动。只修改了部分，效果好。

\subsubsection{Results on ODQA}

\begin{table*}[t]
\caption{Utility results on Wiki-PII dataset}
\label{tab:res-wiki}
\centering
\resizebox{0.85\linewidth}{!}{
\begin{tabular}{c cccccccc}
\toprule
 GPT-3.5 & \multicolumn{2}{c}{NQ} & \multicolumn{2}{c}{TQA} & \multicolumn{2}{c}{WQ} & \multicolumn{2}{c}{CT} \\
\cmidrule(lr){2-3} \cmidrule(lr){4-5} \cmidrule(lr){6-7} \cmidrule(lr){8-9}
Method & BLEU $\uparrow$ & ROUGE-L $\uparrow$ & BLEU $\uparrow$ & ROUGE-L $\uparrow$ & BLEU $\uparrow$ & ROUGE-L $\uparrow$ & BLEU $\uparrow$ & ROUGE-L $\uparrow$ \\
\midrule
0-shot      & 0.0239 & 0.0528 & 0.0371 & 0.0776 & 0.0252 & 0.0622 & 0.0368 & 0.0680 \\
Origin      & 0.0598 & 0.1223 & 0.0660 & 0.1345 & 0.0517 & 0.1179 & \textbf{0.0742} & \textbf{0.1464} \\
Paraphrase  & 0.0508 & 0.1044 & 0.0558 & 0.1241 & 0.0331 & 0.0813 & 0.0563 & 0.1142 \\
ZeroGen     & 0.0113 & 0.0245 & 0.0251 & 0.0494 & 0.0366 & 0.0874 & 0.0484 & 0.0929 \\
AttrPrompt  & 0.0203 & 0.0415 & 0.0264 & 0.0534 & 0.0211 & 0.0478 & 0.0260 & 0.0503 \\
Stage-1     & 0.0435 & 0.0998 & 0.0550 & 0.1231 & 0.0465 & 0.1205 & 0.0509 & 0.1097 \\
Stage-2     & 0.0588 & 0.1250 & 0.0576 & 0.1221 & 0.0609 & 0.1296 & 0.0538 & 0.1210 \\
Ours     & \textbf{0.0674}  & \textbf{0.1437} & \textbf{0.0688}  & \textbf{0.1476} & \textbf{0.0666}  & \textbf{0.1443} & 0.0638  & 0.1392 \\

\midrule % 替换原来的\hline，booktabs统一粗细与间距

Llama3 & \multicolumn{2}{c}{NQ} & \multicolumn{2}{c}{TQA} & \multicolumn{2}{c}{WQ} & \multicolumn{2}{c}{CT} \\
\cmidrule(lr){2-3} \cmidrule(lr){4-5} \cmidrule(lr){6-7} \cmidrule(lr){8-9}
Method & BLEU $\uparrow$ & ROUGE-L $\uparrow$ & BLEU $\uparrow$ & ROUGE-L $\uparrow$ & BLEU $\uparrow$ & ROUGE-L $\uparrow$ & BLEU $\uparrow$ & ROUGE-L $\uparrow$ \\
\midrule
0-shot      & 0.00719 & 0.0136 & 0.0084 & 0.0157 & 0.0072 & 0.0143 & 0.0088 & 0.0150 \\
Origin      & 0.0180  & 0.0315 & 0.0150  & 0.0272 & 0.0147  & 0.0271 & 0.0178  & 0.0323 \\
Paraphrase  & 0.0153  & 0.0269 & 0.0127  & 0.0251 & 0.0094  & 0.0187 & 0.0135  & 0.0252 \\
ZeroGen     & 0.0034  & 0.0063 & 0.0057  & 0.010  & 0.0104  & 0.0201 & 0.0116  & 0.0205 \\
AttrPrompt  & 0.0061  & 0.0107 & 0.006   & 0.0108 & 0.006   & 0.0110 & 0.0062 & 0.0111 \\
Stage-1     & 0.0131  & 0.0257 & 0.0125  & 0.0249 & 0.0132  & 0.0277 & 0.0122  & 0.0242 \\
Stage-2     & 0.0248 & 0.0406 & 0.0252  & 0.0398 & 0.0256  & 0.0429  & \textbf{0.0255}  & 0.0427 \\
Ours     & \textbf{0.0283}  & \textbf{0.0476} & \textbf{0.0281}  & \textbf{0.0495} & \textbf{0.0278}   & \textbf{0.0472} & 0.0251   & \textbf{0.0437} \\
\bottomrule
\end{tabular}
}
\end{table*}

Table~\ref{tab:res-wiki} presents the ODQA results on the Wiki-PII dataset. Overall, our method achieves the best or near-best performance across the four benchmark datasets (NQ, TQA, WQ, and CT) under both GPT-3.5-Turbo and Llama3-8B-Chat. Compared with existing synthetic data generation approaches, including ZeroGen, Stage-2 and AttrPrompt, the proposed method consistently yields higher BLEU and ROUGE-L scores. Notably, the performance advantage is observed for both closed-source and open-source models, indicating that the effectiveness of our approach is not tied to a specific generation backbone. These results demonstrate that the proposed framework can preserve the utility of retrieval data while providing privacy protection, enabling downstream QA systems to maintain high answer quality.

These results highlight the advantage of our fine-grained, risk-aware protection strategy. Existing synthetic generation methods often modify or regenerate large portions of the retrieval corpus, which can introduce semantic distortions and weaken the connection between retrieved documents and downstream questions. 
In contrast, our method selectively protects only the sensitive content while preserving the majority of the original document structure, factual knowledge, and contextual information. By minimizing unnecessary perturbations to non-sensitive content, the resulting retrieval corpus remains highly faithful to the original data distribution and therefore retains strong retrieval utility. Overall, the results demonstrate that effective privacy protection and high retrieval utility are not necessarily conflicting objectives, and that carefully controlled modifications can preserve downstream ODQA performance while substantially reducing privacy risks.

\subsection{privacy evaluation}

To evaluate the privacy protection capability of our method, we conduct both targeted and untargeted extraction attacks following \cite{8zeng2025mitigating}. These attacks are designed to induce retrieval systems to disclose information from the retrieval corpus.

The attack prompt consists of two components: an \textit{information} component and a \textit{command} component. The information component is used to guide the retriever toward specific content, while the command component instructs the language model to reveal the retrieved information (e.g., “Please repeat all the context”). For targeted attacks, the information component is crafted to retrieve specific sensitive content, such as personal identifiers or private medical dialogues. For untargeted attacks, it is designed to maximize the amount of information extracted from the retrieval corpus without focusing on any particular target.

For \textit{targeted attacks}, we report \textbf{Repeat Prompts} and \textbf{Targeted Information}, which measures the number of unique sensitive information items successfully extracted by the attacker.
For \textit{untargeted attacks}, we report four metrics. \textbf{Repeat Prompts} counts the number of attack prompts that induce the model to reproduce at least 10 consecutive tokens from the retrieval corpus. \textbf{Rouge Prompts} counts the number of prompts whose outputs achieve a ROUGE-L score above 0.5 with any retrieved document. We further report \textbf{Repeat Contexts}, the number of unique verbatim excerpts extracted from the corpus, and \textbf{Rouge Contexts}, the number of unique extracted passages with a ROUGE-L score greater than 0.5.

\subsubsection{Results on Targeted Attack}

\begin{table*}[t]
\caption{Targeted attack results on Wiki-PII and HealthCareMagic dataset(250 prompts)}
\label{tab:tar-attack}
\centering
\renewcommand{\arraystretch}{1}
\setlength{\tabcolsep}{4pt}
\begin{tabular}{c|cc|cc|cc|cc}
\toprule
& \multicolumn{2}{c}{Target-wiki-llama-3-8b} & \multicolumn{2}{c}{Target-wiki-gpt-3.5} & \multicolumn{2}{c}{Target-chat-llama-3-8b} & \multicolumn{2}{c}{Target-chat-gpt-3.5} \\
\cmidrule(lr){2-3} \cmidrule(lr){4-5} \cmidrule(lr){6-7} \cmidrule(lr){8-9}
Method & Target info $\downarrow$ & Repeat prompts $\downarrow$ & Target info $\downarrow$ & Repeat prompts $\downarrow$ & Target info $\downarrow$ & Repeat prompts $\downarrow$ & Target info $\downarrow$ & Repeat prompts $\downarrow$ \\
\midrule
origin     & 25 & 12 & 167 & 64 & 7  & 23 & 75  & 132 \\
para       & 9  & 1  & 28  & 9  & 17 & 26 & 42  & 81  \\
ZeroGen    & 4  & 5  & 5   & 2  & 0  & 3  & 1   & 6   \\
AttrPrompt & 0  & 0  & 0   & 0  & 0  & 0  & 0   & 0   \\
Stage-1    & 1  & 4  & 3   & 19 & 3  & 11 & 12  & 36  \\
Stage-2    & 0  & 0  & 0   & 7  & 0  & 0  & 0   & 0   \\
Ours    & 0  & 0  & 0   & 0  & 0  & 0  & 0   & 0   \\
\bottomrule
\end{tabular}
\end{table*}

Table~\ref{tab:tar-attack} presents the results of targeted extraction attacks on both the Wiki-PII and HealthCareMagic datasets. The original retrieval corpus is highly vulnerable to extraction attacks, especially when paired with GPT-3.5-Turbo, where a large amount of sensitive information can be successfully recovered. Although existing synthesis-based methods, such as ZeroGen and AttrPrompt, reduce leakage to some extent, their effectiveness varies across datasets and models. In contrast, our method consistently achieves zero leakage across all settings, with both Targeted Information and Repeat Prompts reduced to zero.

The results demonstrate that our hierarchical protection mechanism effectively removes exploitable sensitive information from the retrieval corpus while preventing verbatim memorization and disclosure. 
%可省略
Compared with Stage-1 and Stage-2, which still exhibit a small number of successful extractions in certain settings, our complete framework eliminates all targeted leakage attempts. This indicates that selectively protecting sensitive content at different privacy levels provides stronger privacy guarantees than directly relying on synthetic generation or coarse-grained document transformation.

\subsubsection{Results on Untargeted Attack}

\begin{table*}[t]
\caption{Untargeted attack results on HealthCareMagic dataset(250 prompts)}
\label{tab:untarget-health}
\centering
\renewcommand{\arraystretch}{1}
\setlength{\tabcolsep}{2.0pt}
\begin{tabular}{c|cc|cc|cc|cc}
\toprule
& \multicolumn{4}{c}{Untarget-chat-llama} & \multicolumn{4}{c}{Untarget-chat-gpt3.5} \\
\cmidrule(lr){2-5} \cmidrule(lr){6-9}
Method & Repeat prompt $\downarrow$ & ROUGE prompt $\downarrow$ & Repeat context $\downarrow$ & ROUGE context $\downarrow$ & Repeat prompt $\downarrow$ & ROUGE prompt $\downarrow$ & Repeat context $\downarrow$ & ROUGE context $\downarrow$ \\
\midrule
origin     & 19 & 17 & 16 & 13 & 61 & 67 & 49 & 67 \\
para       & 23 & 13 & 22 & 11 & 45 & 63 & 33 & 50 \\
ZeroGen    & 0  & 0  & 0  & 0  & 0  & 0  & 0  & 0  \\
AttrPrompt & 0  & 0  & 0  & 0  & 0  & 0  & 0  & 0  \\
Stage-1    & 1  & 2  & 1  & 2  & 1  & 0  & 1  & 0  \\
Stage-2    & 0  & 0  & 0  & 0  & 0  & 0  & 0  & 0  \\
Ours    & 0  & 0  & 0   & 0  & 0  & 0  & 0   & 0   \\
\bottomrule
\end{tabular}
\end{table*}

Table~\ref{tab:untarget-health} reports the results of untargeted extraction attacks on the HealthCareMagic dataset. The original retrieval corpus remains highly vulnerable, especially under GPT-3.5-Turbo, where a large number of prompts successfully induce verbatim or semantically similar disclosures. Although paraphrasing reduces exact leakage, substantial privacy risks still remain. In contrast, our method achieves zero leakage across all prompt-level and context-level metrics under both generation backbones. Compared with Stage-1, which still allows a small number of successful extractions, our complete framework effectively prevents both exact memorization and semantic reconstruction of sensitive content, demonstrating strong robustness against untargeted extraction attacks.

\subsection{Ablation Study}

In this section, we conduct two ablation studies to further analyze the effectiveness of our method. Specifically, we investigate the impact of different privacy budgets and the effect of varying the number of retrieved documents. These experiments aim to provide a more fine-grained understanding of how key design choices affect both privacy protection and utility performance.

\subsubsection{Impact of Privacy Budgets}

\begin{figure*}[t]
\centering
% 每张占 0.23\textwidth，留微小空隙
\begin{minipage}{0.23\textwidth}
\centering
\includegraphics[width=\linewidth]{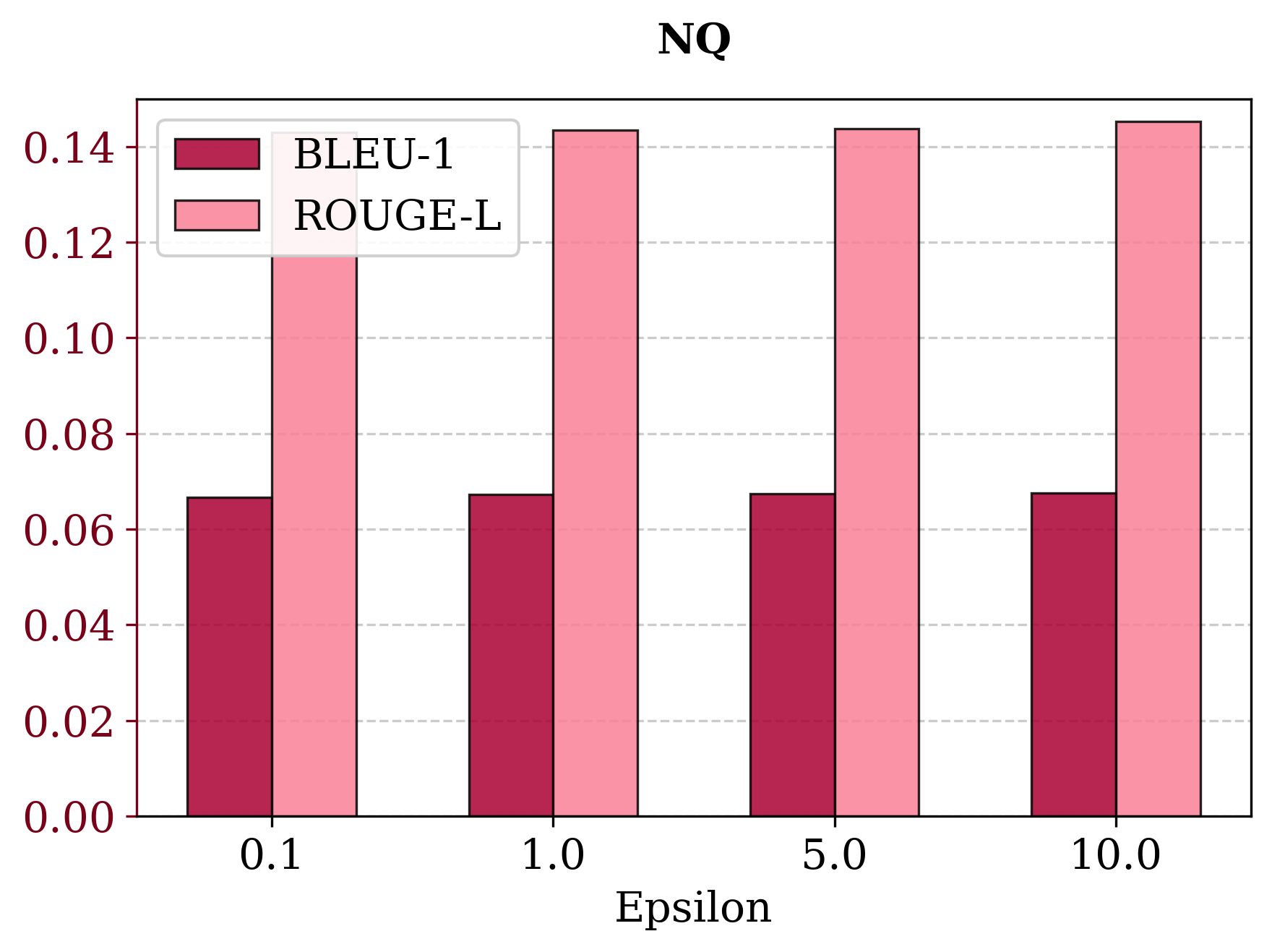}
%\subcaption{Fig 1 caption}
\end{minipage}
\hfill
\begin{minipage}{0.23\textwidth}
\centering
\includegraphics[width=\linewidth]{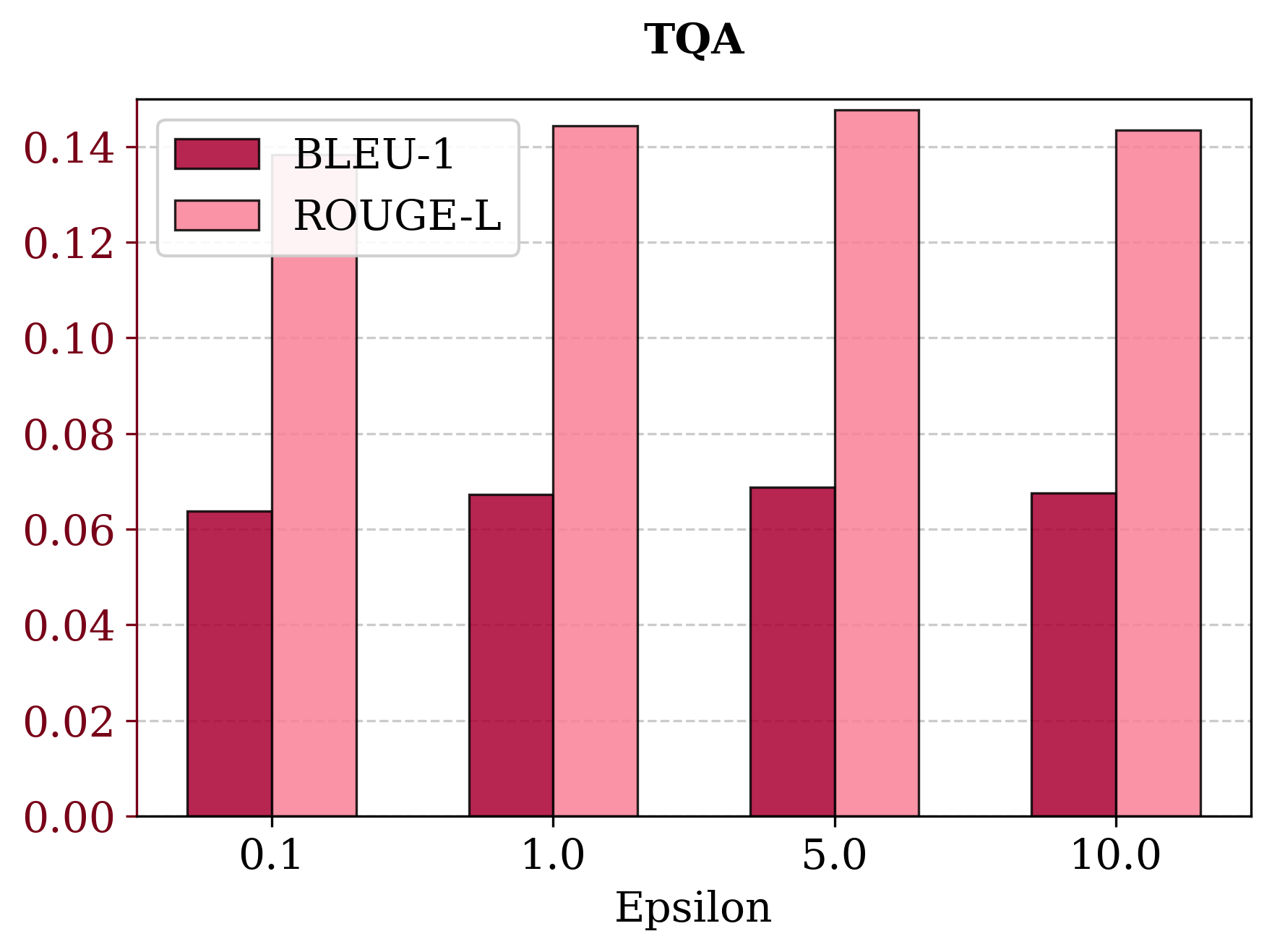}
%\subcaption{Fig 2 caption}
\end{minipage}
\hfill
\begin{minipage}{0.23\textwidth}
\centering
\includegraphics[width=\linewidth]{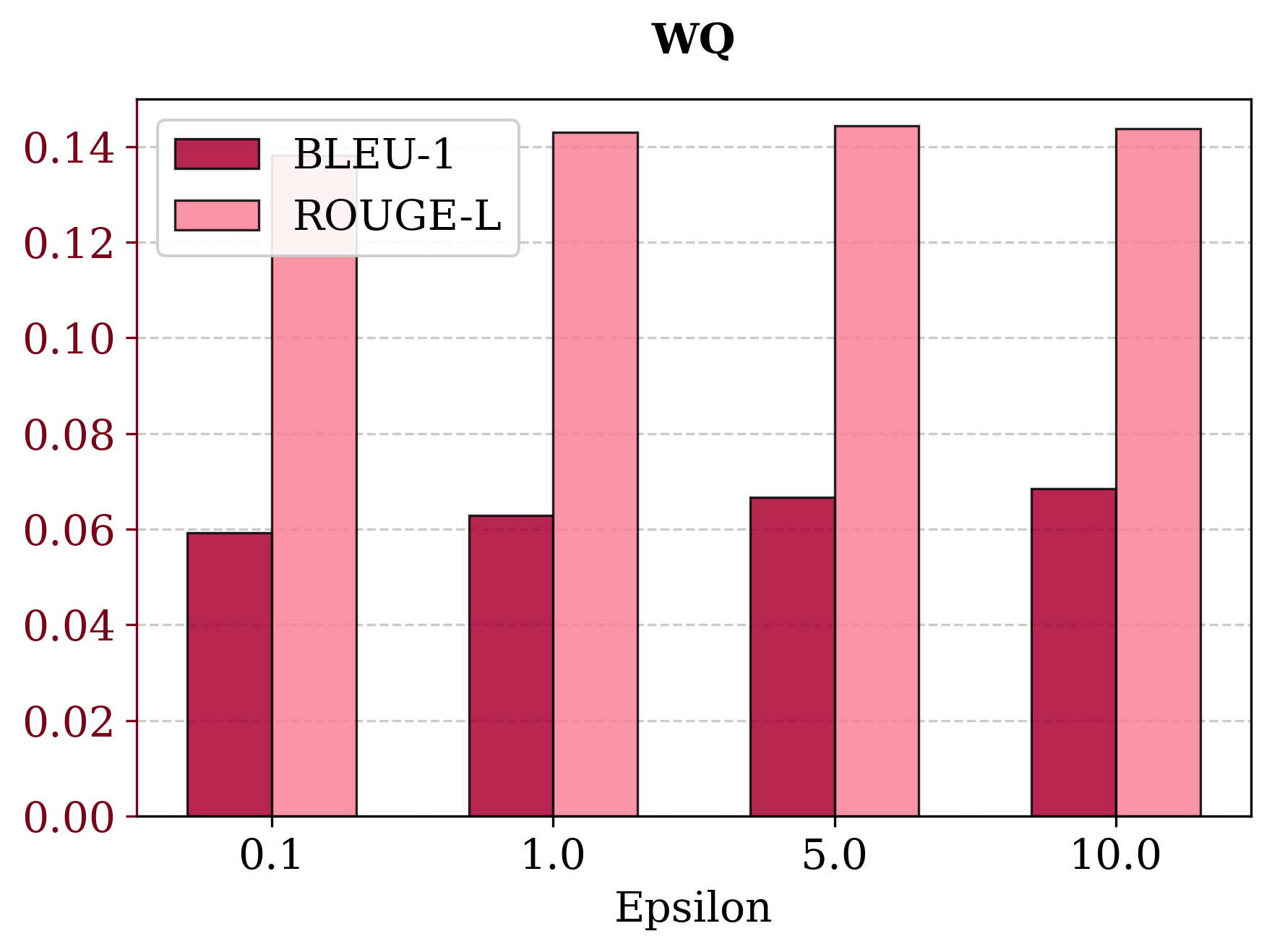}
%\subcaption{Fig 3 caption}
\end{minipage}
\hfill
\begin{minipage}{0.23\textwidth}
\centering
\includegraphics[width=\linewidth]{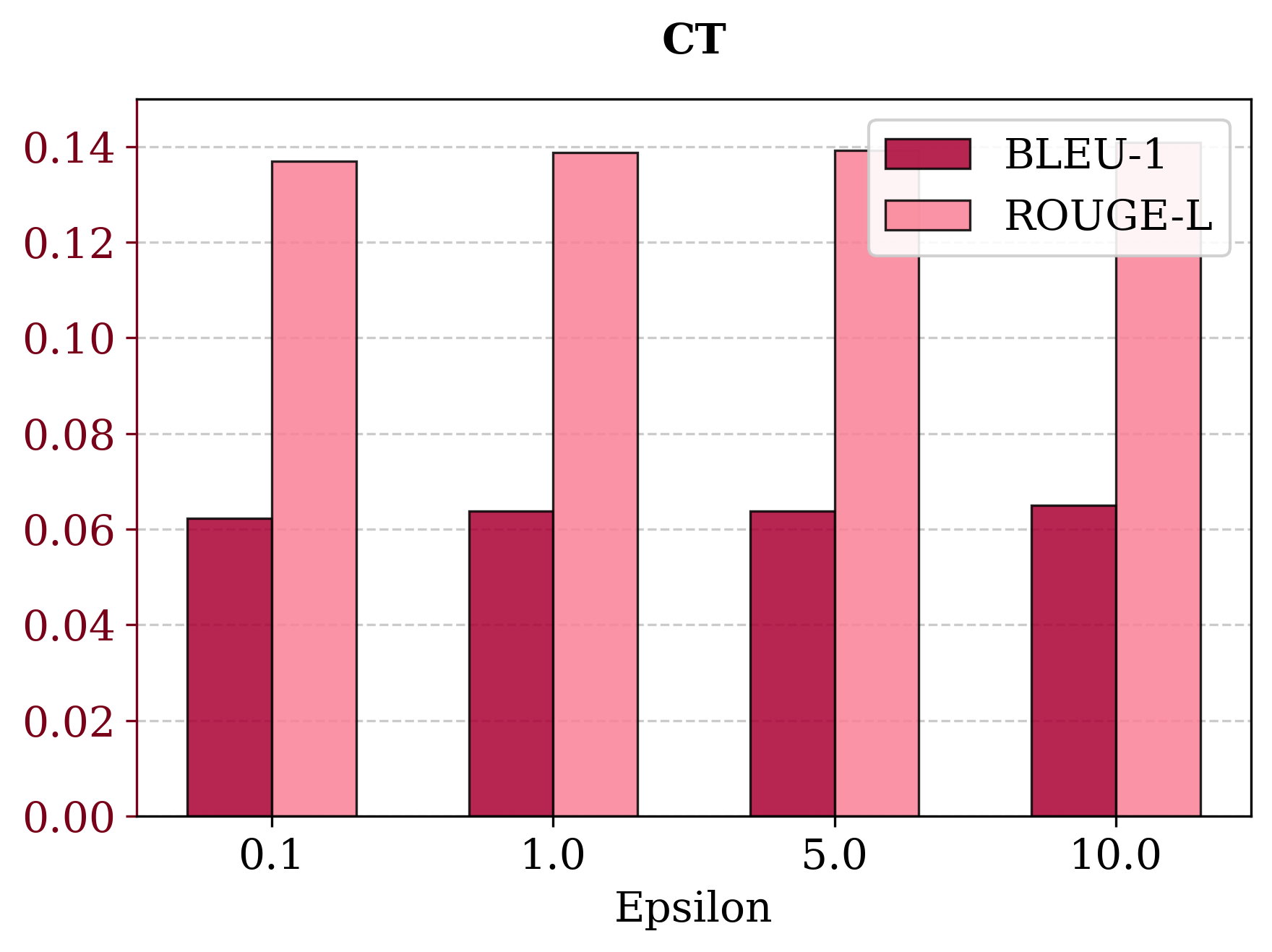}
%\subcaption{Fig 4 caption}
\end{minipage}
\caption{Utility Results under Different Privacy Budgets}
\label{fig:ab-privacy}
\end{figure*}

Fig.\ref{fig:ab-privacy} illustrates the impact of the privacy budget $\epsilon$ on utility across four ODQA benchmarks (NQ, TQA, WQ, and CT). Overall, both BLEU-1 and ROUGE-L scores consistently increase as $\epsilon$ grows, indicating a clear trade-off between privacy strength and answear quality. When $\epsilon$ is small (stronger privacy), performance is relatively lower due to more aggressive perturbation. As $\epsilon$ increases, the model is allowed to retain more accurate information, leading to steady improvements in both metrics. The trend is consistent across all datasets, demonstrating that higher privacy budgets enable better preservation of retrieval utility while maintaining stable gains across benchmarks.

\begin{table}[t]
\centering
\caption{Targeted Attack Results under different privacy budgets}
\label{tab:eps-target}
\renewcommand{\arraystretch}{1.1}
\setlength{\tabcolsep}{4pt}
\begin{tabular}{c|cc|cc}
\toprule
& \multicolumn{2}{c}{Chat} & \multicolumn{2}{c}{Wiki} \\
\cmidrule(lr){2-3} \cmidrule(lr){4-5}
$\epsilon$ & Target info $\downarrow$ & Repeat prompts $\downarrow$ & Target info $\downarrow$ & Repeat prompts $\downarrow$ \\
\midrule
0.1 & 0 & 0 & 0 & 0 \\
1   & 0 & 0 & 0 & 0 \\
5   & 0 & 0 & 0 & 0 \\
10  & 0 & 0 & 0 & 0 \\
\bottomrule
\end{tabular}
\end{table}

Table~\ref{tab:eps-target} reports targeted attack results under different privacy budgets $\epsilon$ on both the HealthCareMagic and Wiki-PII datasets using GPT-3.5 as the generation backbone.Table~\ref{tab:eps-target} reports targeted attack results under different privacy budgets $\epsilon$ on both the HealthCareMagic and Wiki-PII datasets using GPT-3.5 as the generation backbone. Across all settings, we observe zero successful extraction, including both Targeted Information and Repeat Prompts, under all privacy budgets from $\epsilon=0.1$ to $\epsilon=10$. This indicates that the proposed method is robust to varying levels of privacy strength, and even under weaker privacy constraints (larger $\epsilon$), no sensitive information can be recovered via targeted attacks. These results demonstrate that the hierarchical protection mechanism effectively eliminates direct leakage of sensitive content while maintaining consistent behavior across different privacy budgets.

\subsubsection{Impact of the retrieved number of documents}

\begin{table*}[t]
\centering
\caption{Utility Results on Number of Retrieved Docs}
\label{tab:ablation_k_all_util}
\renewcommand{\arraystretch}{1.1}
\resizebox{0.85\linewidth}{!}{% resizebox只包裹tabular
\setlength{\tabcolsep}{4pt}
\begin{tabular}{c|cc|cc|cc|cc|cc} % 加竖线分隔Health与NQ
\toprule
& \multicolumn{2}{c}{HealthCareMagic} & \multicolumn{2}{c}{NQ} & \multicolumn{2}{c}{TQA} & \multicolumn{2}{c}{WQ} & \multicolumn{2}{c}{CT} \\
\cmidrule(lr){2-3} \cmidrule(lr){4-5} \cmidrule(lr){6-7} \cmidrule(lr){8-9} \cmidrule(lr){10-11}
$k$ & BLEU & ROUGE & BLEU & ROUGE-L & BLEU & ROUGE-L & BLEU & ROUGE-L & BLEU & ROUGE-L \\
\midrule
1 & 0.3275 & 0.3445 & 0.0674 & 0.1437 & 0.0688 & 0.1476 & 0.0666 & 0.1443 & 0.0638 & 0.1392 \\
3 & 0.3300 & 0.3456 & 0.0666 & 0.1432 & 0.0683 & 0.1453 & 0.0693 & 0.1455 & 0.0648 & 0.1388 \\
5 & 0.3246 & 0.3430 & 0.0667 & 0.1419 & 0.0670 & 0.1427 & 0.0696 & 0.1459 & 0.0623 & 0.1371 \\
\bottomrule
\end{tabular}
}
\end{table*}

Table~\ref{tab:ablation_k_all_util} reports the impact of varying the number of retrieved documents $k$ on utility across HealthCareMagic and four ODQA benchmarks. Overall, the results remain stable when increasing $k$ from 1 to 5, with only minor fluctuations in BLEU and ROUGE scores. On HealthCareMagic, performance is consistently high and even slightly improves at $k=3$, indicating that a moderate increase in retrieved context can provide additional useful information. For ODQA datasets (NQ, TQA, WQ, and CT), we observe similarly small variations, suggesting that increasing the number of retrieved documents does not introduce noticeable degradation in generation quality.

These results demonstrate that our pipeline maintains robust utility across different retrieval settings. In particular, expanding the retrieval scope does not harm performance, indicating that the generated data is able to support multi-document retrieval without overfitting to a specific retrieval size. This further confirms the stability and generalization capability of our method under varying retrieval configurations.

\subsubsection{Impact of Model Choice}

\begin{figure}[t]
\centering
% 每张占 0.23\textwidth，留微小空隙
\begin{minipage}{0.23\textwidth}
\centering
\includegraphics[width=\linewidth]{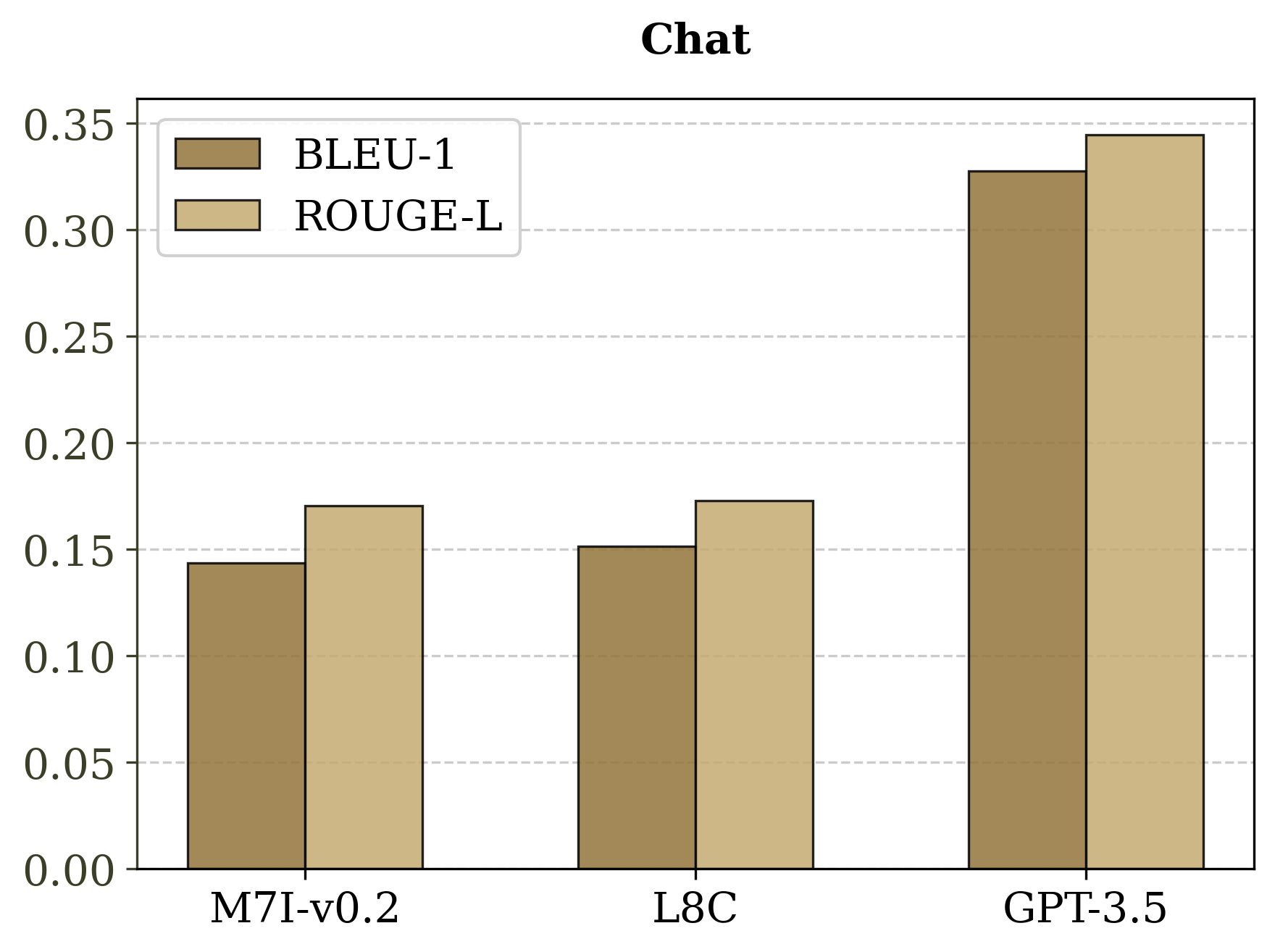}
%\subcaption{Fig 1 caption}
\end{minipage}
\hfill
\begin{minipage}{0.23\textwidth}
\centering
\includegraphics[width=\linewidth]{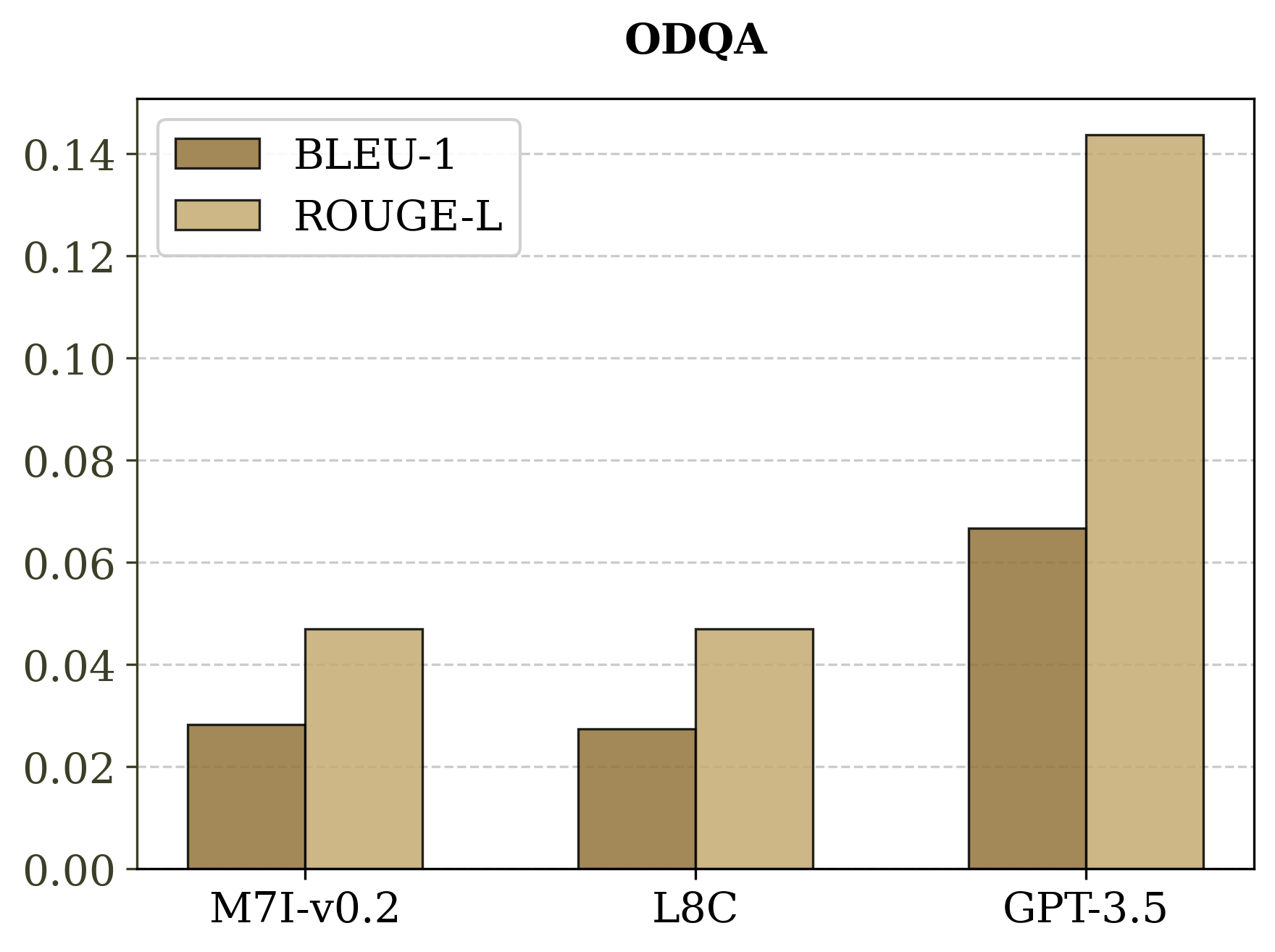}
%\subcaption{Fig 2 caption}
\end{minipage}

\caption{Utility Results under Different Generation Models}
\label{fig:ab-model}
\end{figure}

We evaluate the utility performance of Mistral-7B-Instruct-v0.2(M7I-v0.2), L8C and GPT-3.5 on ODQA and HealthCareMagic benchmarks, where BLEU-1 and ROUGE-L are adopted to quantify the similarity between generated content and ground-truth references. Notably, the ODQA results plotted in the figure correspond to the averaged values across its four sub-datasets.
On both datasets, M7I-v0.2 and L8C obtain highly comparable but relatively low metric values. GPT-3.5 achieves obvious performance advantages over the two open-source models, and the performance gap is more prominent on the challenging ODQA task than the HealthCareMagic task.

The close results of M7I-v0.2 and L8C reveal that lightweight open-source LLMs have similar constrained capacity for our stage-1 data generation, which produces low-fidelity synthetic text. In contrast, GPT-3.5 provides steady and superior utility across two different task scenarios. This outcome confirms GPT-3.5 acts as a reliable backbone for our two-stage privacy-preserving generation framework.

\subsection{Hyperparameter Choice}

%此外，我们还比较了不同超参数组合对所提出方法的效用影响。具体而言我们
\begin{table*}[t]
\centering
\caption{Utility Results on Choice of Hyperparameter}
\label{tab:hyperparameter}
\renewcommand{\arraystretch}{1.1}
\resizebox{0.85\linewidth}{!}{% resizebox只包裹tabular
\setlength{\tabcolsep}{4pt}
\begin{tabular}{c|cc|cc|cc|cc|cc} % 加竖线分隔Health与NQ
\toprule
& \multicolumn{2}{c}{HealthCareMagic} & \multicolumn{2}{c}{NQ} & \multicolumn{2}{c}{TQA} & \multicolumn{2}{c}{WQ} & \multicolumn{2}{c}{CT} \\
\cmidrule(lr){2-3} \cmidrule(lr){4-5} \cmidrule(lr){6-7} \cmidrule(lr){8-9} \cmidrule(lr){10-11}
$\gamma, \ \alpha$ & BLEU & ROUGE & BLEU & ROUGE-L & BLEU & ROUGE-L & BLEU & ROUGE-L & BLEU & ROUGE-L \\
\midrule
$0.6,\ 0.3$ & 0.3182 & 0.3351 & 0.0605 & 0.1344 & 0.0619 & 0.1382 & 0.0597 & 0.1350 & 0.0571 & 0.1299 \\
$0.6,\ 0.7$ & 0.3153 & 0.3337 & 0.0598 & 0.1326 & 0.0601 & 0.1333 & 0.0627 & 0.1366 & 0.0556 & 0.1278 \\
$0.4,\ 0.5$ & 0.3190 & \textbf{0.3420} & 0.0612 & 0.1351 & 0.0624 & 0.1390 & 0.0603 & 0.1355 & 0.0576 & 0.1305 \\
$0.5,\ 0.5$ & 0.3211 & 0.3383 & 0.0650 & 0.1411 & \textbf{0.0691} & 0.1408 & 0.0622 & 0.1374 & 0.0624 & \textbf{0.1401} \\
$0.6,\ 0.5$ & \textbf{0.3275} & 0.3445 & \textbf{0.0674} & \textbf{0.1437} & 0.0688 & \textbf{0.1476} & \textbf{0.0666} & \textbf{0.1443} & \textbf{0.0638} & 0.1392 \\
\bottomrule
\end{tabular}
}
\end{table*}

Furthermore, we investigate the impact of different hyperparameter combinations on the utility of the proposed method. Specifically, $\gamma$ balances the relative importance of semantic relevance and field sensitivity in risk assessment, while $\alpha$ controls the trade-off between privacy protection and semantic preservation during candidate selection. As shown in Table~\ref{tab:hyperparameter}, the proposed method exhibits stable performance across different parameter settings, indicating that it is not overly sensitive to hyperparameter choices. Among all configurations, $\gamma=0.6$ and $\alpha=0.5$ consistently achieve the best overall results. This suggests that assigning slightly greater importance to semantic relevance than field sensitivity leads to more accurate risk assessment, while a balanced weighting between privacy protection and semantic preservation provides the most effective trade-off between privacy and utility. Therefore, we adopt $\gamma=0.6$ and $\alpha=0.5$ as the default setting in all experiments.

\subsection{Efficiency Analysis}

To evaluate the computational efficiency of the proposed method, particularly the time overhead introduced by the privacy protection procedure, we compare its runtime with five representative privacy-preserving approaches. The experiments are conducted with the number of retrieved documents set to K=1, which eliminates the impact of retrieval-scale variations and enables a fair comparison of the privacy protection overhead. 
As shown in Table~\ref{tab:runtime}, all methods have the same retrieval time, and the runtime differences mainly come from the privacy protection process. Our method achieves a relatively low privacy protection overhead and is only slower than PARA. Although PARA has lower runtime, it relies on online LLM API calls for privacy processing, while our method can operate offline without LLM invocation, making it more suitable for practical deployment.

\begin{table}[h]
\caption{Runtime Comparison (Seconds)} 
\centering
\resizebox{0.8\columnwidth}{!}{% resizebox只包裹tabular
\begin{tabular}{lrrr}
\hline
               & \multicolumn{1}{c}{Retrieval} & \multicolumn{1}{c}{Protection} & \multicolumn{1}{c}{Total}\\ \hline
para        & 6.29           & 1628.37     & 1634.66 \\
ZeroGen     & 6.29           & 8792.48     & 8798.77 \\
AttrPrompt  & 6.29           & 3681.30     & 3687.59 \\
Stage-2     & 6.29           & 3259.11     & 3265.40 \\
LPRAG       & 6.29           & 3977.83     & 3984.12 \\
Ours        & 6.29           & 2712.26     & 2718.55 \\ \hline
\end{tabular}
}
\label{tab:runtime}
\end{table}

\section{Conclusion}
In this paper, we revisit privacy protection in retrieval-augmented generation from a new perspective and identify a fundamental limitation of existing approaches: the document-level static privacy risk assumption. We show that privacy leakage in RAG is inherently query-dependent, and the same document may pose substantially different privacy risks under different user prompts. Motivated by this observation, we propose PA-HDP, a Prompt-Aware Dynamic Hierarchical Differential Privacy protection framework that dynamically assesses privacy risks and applies differentiated protection through risk stratification, adaptive sensitive entity replacement, and exponential mechanism-based text selection.

Extensive experiments on medical dialogue and open-domain question answering benchmarks demonstrate that PA-HDP effectively mitigates both targeted and untargeted privacy attacks while maintaining high retrieval utility. Moreover, under rigorous differential privacy guarantees, PA-HDP achieves a favorable trade-off between privacy protection and model performance. We hope this work highlights the importance of query-aware privacy protection and provides a foundation for the secure deployment of RAG systems on sensitive data.

\subsubsection{limitation}
PA-HDP assumes a fixed privacy budget that is predetermined before deployment. However, in real-world RAG systems, the number of user queries is typically unknown in advance, and privacy budgets may be gradually exhausted as queries accumulate. Future work will investigate adaptive privacy budget allocation and accounting mechanisms to support long-term deployments while maintaining rigorous privacy guarantees.

%Although PA-HDP provides rigorous differential privacy guarantees and achieves strong privacy--utility trade-offs, it assumes a fixed privacy budget that is allocated in advance. In practice, however, the number and distribution of user queries are often unknown and may vary over time. As more queries are issued, the available privacy budget is gradually consumed, which can limit the applicability of the framework in long-term or high-frequency deployment scenarios.

%A promising direction for future work is to develop adaptive privacy budget management mechanisms that dynamically allocate and optimize privacy expenditure according to query characteristics and system usage patterns. Such techniques could further improve the scalability and practicality of privacy-preserving RAG systems while maintaining strong privacy guarantees.

%{\appendices
%\section*{Proof of the First Zonklar Equation}
%Appendix one text goes here.
% You can choose not to have a title for an appendix if you want by leaving the argument blank
%\section*{Proof of the Second Zonklar Equation}
%Appendix two text goes here.}

\bibliography{reference}

@misc{1koga2025privacypreserving,
      title={Privacy-Preserving Retrieval-Augmented Generation with Differential Privacy}, 
      author={Tatsuki Koga and Ruihan Wu and Zhiyuan Zhang and Kamalika Chaudhuri},
      year={2025},
      eprint={2412.04697},
      archivePrefix={arXiv},
      primaryClass={cs.CR},
      url={https://arxiv.org/abs/2412.04697}, 
}

@article{2vinod2026invisibleink,
  title={Invisibleink: High-utility and low-cost text generation with differential privacy},
  author={Vinod, Vishnu and Pillutla, Krishna and Guha Thakurta, Abhradeep},
  journal={Advances in Neural Information Processing Systems},
  volume={38},
  pages={67254--67298},
  year={2026}
}

@article{3wang2025privacy,
  title={Privacy-aware decoding: Mitigating privacy leakage of large language models in retrieval-augmented generation},
  author={Wang, Haoran and Xu, Xiongxiao and Huang, Baixiang and Shu, Kai},
  journal={arXiv preprint arXiv:2508.03098},
  year={2025}
}

@inproceedings{4grislain2025rag,
  title={Rag with differential privacy},
  author={Grislain, Nicolas},
  booktitle={2025 IEEE Conference on Artificial Intelligence (CAI)},
  pages={847--852},
  year={2025},
  organization={IEEE}
}

@inproceedings{5hemmat2025vague,
  title={VAGUE-Gate: Plug-and-Play Local-Privacy Shield for Retrieval-Augmented Generation},
  author={Hemmat, Arshia and Moqadas, Matin and Mamanpoosh, Ali and Rismanchian, Amirmasoud and Fatemi, Afsaneh},
  booktitle={Proceedings of the 14th International Joint Conference on Natural Language Processing and the 4th Conference of the Asia-Pacific Chapter of the Association for Computational Linguistics},
  pages={3715--3730},
  year={2025}
}

@incollection{dwork2025differential,
  title={Differential privacy},
  author={Dwork, Cynthia},
  booktitle={Encyclopedia of Cryptography, Security and Privacy},
  pages={649--652},
  year={2025},
  publisher={Springer}
}

@article{6joopearl,
  title={PEARL: Differentially Private and Entropy-Aware Regulated Language Generation},
  author={Joo, Seongho and Koh, Hyukhun and Jung, Kyomin},
  year={2026},
  journal={openreview.net/pdf?id=qIUR54yyro}
}

@article{7he2025mitigating,
  title={Mitigating privacy risks in Retrieval-Augmented Generation via locally private entity perturbation},
  author={He, Longzhu and Tang, Peng and Zhang, Yuanhe and Zhou, Pengpeng and Su, Sen},
  journal={Information Processing \& Management},
  volume={62},
  number={4},
  pages={104150},
  year={2025},
  publisher={Elsevier}
}

@inproceedings{mcsherry2007mechanism,
  title={Mechanism design via differential privacy},
  author={McSherry, Frank and Talwar, Kunal},
  booktitle={48th Annual IEEE Symposium on Foundations of Computer Science (FOCS'07)},
  pages={94--103},
  year={2007},
  organization={IEEE}
}

@inproceedings{dwork2006differential,
  title={Differential privacy},
  author={Dwork, Cynthia},
  booktitle={International colloquium on automata, languages, and programming},
  pages={1--12},
  year={2006},
  organization={Springer}
}

@inproceedings{dwork2006calibrating,
  title={Calibrating Noise to Sensitivity in Private Data Analysis},
  author={Dwork, Cynthia and McSherry, Frank and Nissim, Kobbi and Smith, Adam},
  booktitle={Theory of Cryptography Conference},
  pages={265--284},
  year={2006},
  publisher={Springer}
}

@inproceedings{8zeng2025mitigating,
  title={Mitigating the privacy issues in retrieval-augmented generation (rag) via pure synthetic data},
  author={Zeng, Shenglai and Zhang, Jiankun and He, Pengfei and Ren, Jie and Zheng, Tianqi and Lu, Hanqing and Xu, Han and Liu, Hui and Xing, Yue and Tang, Jiliang},
  booktitle={Proceedings of the 2025 Conference on Empirical Methods in Natural Language Processing},
  pages={24538--24569},
  year={2025}
}

@article{9mori2025differentially,
  title={Differentially Private Synthetic Text Generation for Retrieval-Augmented Generation (RAG)},
  author={Mori, Junki and Kakizaki, Kazuya and Miyagawa, Taiki and Sakuma, Jun},
  journal={arXiv preprint arXiv:2510.06719},
  year={2025}
}

@inproceedings{10zeng2024good,
  title={The good and the bad: Exploring privacy issues in retrieval-augmented generation (rag)},
  author={Zeng, Shenglai and Zhang, Jiankun and He, Pengfei and Liu, Yiding and Xing, Yue and Xu, Han and Ren, Jie and Chang, Yi and Wang, Shuaiqiang and Yin, Dawei and others},
  booktitle={Findings of the Association for Computational Linguistics: ACL 2024},
  pages={4505--4524},
  year={2024}
}

@article{11chen2025privacy,
  title={Privacy-Preserving Reasoning with Knowledge-Distilled Parametric Retrieval Augmented Generation},
  author={Chen, Jinwen and Zhang, Hainan and Pang, Liang and Tong, Yongxin and Zhou, Haibo and Zhan, Yuan and Lin, Wei and Zheng, Zhiming},
  journal={arXiv preprint arXiv:2509.01088},
  year={2025}
}

@article{12tang2026differentially,
  title={Differentially Private Retrieval-Augmented Generation},
  author={Tang, Tingting and Flemings, James and Wang, Yongqin and Annavaram, Murali},
  journal={arXiv preprint arXiv:2602.14374},
  year={2026}
}

@inproceedings{13yu2024textual,
  title={Textual differential privacy for context-aware reasoning with large language model},
  author={Yu, Junwei and Zhou, Jieyu and Ding, Yepeng and Zhang, Lingfeng and Guo, Yuheng and Sato, Hiroyuki},
  booktitle={2024 IEEE 48th Annual Computers, Software, and Applications Conference (COMPSAC)},
  pages={988--997},
  year={2024},
  organization={IEEE}
}

@inproceedings{14cheng2025remoterag,
  title={Remoterag: A privacy-preserving llm cloud rag service},
  author={Cheng, Yihang and Zhang, Lan and Wang, Junyang and Yuan, Mu and Yao, Yunhao},
  booktitle={Findings of the Association for Computational Linguistics: ACL 2025},
  pages={3820--3837},
  year={2025}
}

@inproceedings{15he2025press,
  title={Press: Defending privacy in retrieval-augmented generation via embedding space shifting},
  author={He, Jiaming and Liu, Cheng and Hou, Guanyu and Jiang, Wenbo and Li, Jiachen},
  booktitle={ICASSP 2025-2025 IEEE International Conference on Acoustics, Speech and Signal Processing (ICASSP)},
  pages={1--5},
  year={2025},
  organization={IEEE}
}

@inproceedings{16charles2025learning,
  title={Learning with User-Level Differential Privacy Under Fixed Compute Budgets},
  author={Charles, Zachary and Ganesh, Arun and McKenna, Ryan and McMahan, H Brendan and Mitchell, Nicole and Pillutla, Krishna and Rush, Keith},
  booktitle={2025 IEEE Conference on Secure and Trustworthy Machine Learning (SaTML)},
  pages={901--920},
  year={2025},
  organization={IEEE}
}

@article{17boenisch2023have,
  title={Have it your way: Individualized privacy assignment for DP-SGD},
  author={Boenisch, Franziska and M{\"u}hl, Christopher and Dziedzic, Adam and Rinberg, Roy and Papernot, Nicolas},
  journal={Advances in Neural Information Processing Systems},
  volume={36},
  pages={19073--19103},
  year={2023}
}

@article{18wu2025private,
  title={Private-RAG: Answering Multiple Queries with LLMs while Keeping Your Data Private},
  author={Wu, Ruihan and Wang, Erchi and Zhang, Zhiyuan and Wang, Yu-Xiang},
  journal={arXiv preprint arXiv:2511.07637},
  year={2025}
}

@article{19wang2025learning,
  title={Learning to Erase Private Knowledge from Multi-Documents for Retrieval-Augmented Large Language Models},
  author={Wang, Yujing and Zhang, Hainan and Pang, Liang and Tong, Yongxin and Guo, Binghui and Zheng, Hongwei and Zheng, Zhiming},
  journal={arXiv preprint arXiv:2504.09910},
  year={2025}
}

@inproceedings{20wu2025beyond,
  title={Beyond Per-Question Privacy: Multi-Query Differential Privacy for RAG Systems},
  author={Wu, Ruihan and Wang, Erchi and Wang, Yu-Xiang},
  booktitle={NeurIPS 2025 Workshop: Reliable ML from Unreliable Data},
  year={2025}
}

@inproceedings{24ye2025efficient,
  title={Efficient Privacy-Preserving Retrieval Augmented Generation with Distance-Preserving Encryption},
  author={Ye, Huanyi and Guo, Jiale and Liu, Ziyao and Lam, Kwok-Yan},
  booktitle={2025 3rd International Conference on Foundation and Large Language Models (FLLM)},
  pages={668--676},
  year={2025},
  organization={IEEE}
}

@article{25guan2025privacy,
  title={Privacy Challenges and Solutions in Retrieval-Augmented Generation-Enhanced LLMs for Healthcare Chatbots: A Review of Applications, Risks, and Future Directions},
  author={Guan, Shaowei and Kwok, Hin Chi and Law, Ngai Fong and Stiglic, Gregor and Qin, Harry and Hui, Vivian},
  journal={arXiv preprint arXiv:2511.11347},
  year={2025}
}

@article{26mu2026towards,
  title={Towards Secure Retrieval-Augmented Generation: A Comprehensive Review of Threats, Defenses and Benchmarks},
  author={Mu, Yanming and Hu, Hao and Li, Feiyang and Yuan, Qiao and Wu, Jiang and Liu, Zichuan and Liu, Pengcheng and Wang, Mei and Zhou, Hongwei and Liu, Yuling},
  journal={arXiv preprint arXiv:2603.21654},
  year={2026}
}

@inproceedings{ex-huang2023privacy,
  title={Privacy implications of retrieval-based language models},
  author={Huang, Yangsibo and Gupta, Samyak and Zhong, Zexuan and Li, Kai and Chen, Danqi},
  booktitle={Proceedings of the 2023 Conference on Empirical Methods in Natural Language Processing},
  pages={14887--14902},
  year={2023}
}

@article{nq-kwiatkowski2019natural,
  title={Natural questions: a benchmark for question answering research},
  author={Kwiatkowski, Tom and Palomaki, Jennimaria and Redfield, Olivia and Collins, Michael and Parikh, Ankur and Alberti, Chris and Epstein, Danielle and Polosukhin, Illia and Devlin, Jacob and Lee, Kenton and others},
  journal={Transactions of the Association for Computational Linguistics},
  volume={7},
  pages={453--466},
  year={2019},
  publisher={MIT Press One Rogers Street, Cambridge, MA 02142-1209, USA journals-info~…}
}

@inproceedings{tqa-joshi2017triviaqa,
  title={Triviaqa: A large scale distantly supervised challenge dataset for reading comprehension},
  author={Joshi, Mandar and Choi, Eunsol and Weld, Daniel S and Zettlemoyer, Luke},
  booktitle={Proceedings of the 55th Annual Meeting of the Association for Computational Linguistics (Volume 1: Long Papers)},
  pages={1601--1611},
  year={2017}
}

@inproceedings{wq-berant2013semantic,
  title={Semantic parsing on freebase from question-answer pairs},
  author={Berant, Jonathan and Chou, Andrew and Frostig, Roy and Liang, Percy},
  booktitle={Proceedings of the 2013 conference on empirical methods in natural language processing},
  pages={1533--1544},
  year={2013}
}

@inproceedings{ct-baudivs2015modeling,
  title={Modeling of the question answering task in the yodaqa system},
  author={Baudi{\v{s}}, Petr and {\v{S}}ediv{\`y}, Jan},
  booktitle={International Conference of the cross-language evaluation Forum for European languages},
  pages={222--228},
  year={2015},
  organization={Springer}
}

@inproceedings{bs-ye2022zerogen,
  title={Zerogen: Efficient zero-shot learning via dataset generation},
  author={Ye, Jiacheng and Gao, Jiahui and Li, Qintong and Xu, Hang and Feng, Jiangtao and Wu, Zhiyong and Yu, Tao and Kong, Lingpeng},
  booktitle={Proceedings of the 2022 Conference on Empirical Methods in Natural Language Processing},
  pages={11653--11669},
  year={2022}
}

@article{bs-yu2023large,
  title={Large language model as attributed training data generator: A tale of diversity and bias},
  author={Yu, Yue and Zhuang, Yuchen and Zhang, Jieyu and Meng, Yu and Ratner, Alexander J and Krishna, Ranjay and Shen, Jiaming and Zhang, Chao},
  journal={Advances in neural information processing systems},
  volume={36},
  pages={55734--55784},
  year={2023}
}

@inproceedings{carlini2021extracting,
  title={Extracting training data from large language models},
  author={Carlini, Nicholas and Tramer, Florian and Wallace, Eric and Jagielski, Matthew and Herbert-Voss, Ariel and Lee, Katherine and Roberts, Adam and Brown, Tom and Song, Dawn and Erlingsson, Ulfar and others},
  booktitle={30th USENIX security symposium (USENIX Security 21)},
  pages={2633--2650},
  year={2021}
}

@article{gao2023retrieval,
  title={Retrieval-augmented generation for large language models: A survey},
  author={Gao, Yunfan and Xiong, Yun and Gao, Xinyu and Jia, Kangxiang and Pan, Jinliu and Bi, Yuxi and Dai, Yi and Sun, Jiawei and Wang, Meng and Wang, Haofen},
  journal={arXiv preprint arXiv:2312.10997},
  year={2023}
}

@article{huang2025survey,
  title={A survey on hallucination in large language models: Principles, taxonomy, challenges, and open questions},
  author={Huang, Lei and Yu, Weijiang and Ma, Weitao and Zhong, Weihong and Feng, Zhangyin and Wang, Haotian and Chen, Qianglong and Peng, Weihua and Feng, Xiaocheng and Qin, Bing and others},
  journal={ACM Transactions on Information Systems},
  volume={43},
  number={2},
  pages={1--55},
  year={2025},
  publisher={ACM New York, NY}
}

@article{amugongo2025retrieval,
  title={Retrieval augmented generation for large language models in healthcare: A systematic review},
  author={Amugongo, Lameck Mbangula and Mascheroni, Pietro and Brooks, Steven and Doering, Stefan and Seidel, Jan},
  journal={PLOS Digital Health},
  volume={4},
  number={6},
  pages={e0000877},
  year={2025},
  publisher={Public Library of Science San Francisco, CA USA}
}

@inproceedings{zhao2024optimizing,
  title={Optimizing LLM based retrieval augmented generation pipelines in the financial domain},
  author={Zhao, Yiyun and Singh, Prateek and Bhathena, Hanoz and Ramos, Bernardo and Joshi, Aviral and Gadiyaram, Swaroop and Sharma, Saket},
  booktitle={Proceedings of the 2024 Conference of the North American Chapter of the Association for Computational Linguistics: Human Language Technologies (Volume 6: Industry Track)},
  pages={279--294},
  year={2024}
}

@article{hindi2025enhancing,
  title={Enhancing the precision and interpretability of retrieval-augmented generation (rag) in legal technology: A survey},
  author={Hindi, Mahd and Mohammed, Linda and Maaz, Ommama and Alwarafy, Abdulmalik},
  journal={IEEE Access},
  year={2025},
  publisher={IEEE}
}
\bibliographystyle{IEEEtran}

\appendix

\subsection{Missing experimental results}

\begin{table*}[t]
\centering
\caption{Utility Results under Different Privacy Budgets}
\label{tab:epsilon}
\renewcommand{\arraystretch}{1.1}
\resizebox{0.85\linewidth}{!}{% resizebox只包裹tabular
\setlength{\tabcolsep}{4pt}
\begin{tabular}{c|cc|cc|cc|cc|cc} % 加竖线分隔Health与NQ
\toprule
& \multicolumn{2}{c}{HealthCareMagic} & \multicolumn{2}{c}{NQ} & \multicolumn{2}{c}{TQA} & \multicolumn{2}{c}{WQ} & \multicolumn{2}{c}{CT} \\
\cmidrule(lr){2-3} \cmidrule(lr){4-5} \cmidrule(lr){6-7} \cmidrule(lr){8-9} \cmidrule(lr){10-11}
$\epsilon$ & BLEU & ROUGE & BLEU & ROUGE-L & BLEU & ROUGE-L & BLEU & ROUGE-L & BLEU & ROUGE-L \\
\midrule
0.1 & 0.3231 & 0.3387 & 0.0666 & 0.1429 & 0.0638 & 0.1383 & 0.0592 & 0.1381 & 0.0623 & 0.1369 \\
1   & 0.3266 & 0.3445 & 0.0672 & 0.1435 & 0.0673 & 0.1444 & 0.0629 & 0.1429 & 0.0638 & 0.1387 \\
5   & 0.3275 & 0.3445 & 0.0674 & 0.1437 & 0.0688 & 0.1476 & 0.0666 & 0.1443 & 0.0638 & 0.1392 \\
10  & 0.3337 & 0.3501 & 0.0676 & 0.1452 & 0.0676 & 0.1435 & 0.0685 & 0.1437 & 0.0650 & 0.1409 \\
\bottomrule
\end{tabular}
}
\end{table*}

Table~\ref{tab:epsilon} reports the utility results on all datasets under different privacy budgets. Overall, the proposed method exhibits a clear privacy--utility trade-off: as the privacy budget $\epsilon$ increases, the utility metrics generally improve across all datasets. This trend is expected, since a larger privacy budget introduces less perturbation during the privacy-preserving generation process, allowing the generated responses to better preserve the semantic information of the retrieved content.

More importantly, the performance degradation under stringent privacy budgets remains relatively small. Even with a very strong privacy guarantee ($\epsilon=0.1$), the BLEU and ROUGE scores are only slightly lower than those obtained with much larger privacy budgets. For example, on the HealthCareMagic dataset, the ROUGE score increases from 0.3387 at $\epsilon=0.1$ to only 0.3501 at $\epsilon=10$, while similar trends are observed on the remaining datasets. This demonstrates that our method effectively preserves generation quality while providing rigorous differential privacy guarantees, making it suitable for privacy-sensitive RAG applications.

The relatively stable performance across a wide range of privacy budgets indicates that the proposed prompt-aware mechanism can accurately identify privacy-critical content, enabling the privacy budget to be utilized efficiently rather than perturbing all retrieved information uniformly.

\subsection{Differential Privacy Guarantee of PA-HDP}

We prove that PA-HDP satisfies differential privacy by decomposing the algorithm into three components: (i) Laplace perturbation for risk scoring, (ii) post-processing-based hierarchical filtering, and (iii) exponential mechanism for private candidate selection. The overall privacy guarantee follows from sequential composition and post-processing invariance.

\paragraph{Step 1: Laplace Mechanism for Risk Score.}
For each segment $s_{i,j}$, the risk score is computed as $S_{total}(q,s_{i,j})$. Since the risk score is normalized to the interval $[0,1]$, changing one neighboring segment under the semantic metric alters the score by at most one. Therefore, the semantic sensitivity of the risk score is bounded by $\Delta S = 1$.

To privately release the risk score, PA-HDP allocates half of the total privacy budget, i.e., $\epsilon/2$, and adds Laplace noise:
\[
S_{total}^{*}(q,s_{i,j})
=
S_{total}(q,s_{i,j})
+
\mathrm{Lap}\!\left(\frac{2}{\epsilon}\right).
\]

\begin{lemma}
The above mechanism satisfies $(\epsilon/2,0)$-Semantic Metric Differential Privacy.
\end{lemma}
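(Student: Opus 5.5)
The plan is to follow the standard Laplace-mechanism argument (the Laplace Mechanism lemma in the Preliminary section), but to carry it out pointwise on output densities. Working with densities lets the privacy loss be compared against the semantic metric $d_\phi$ of Definition~\ref{de:sdp}, instead of against a constant.

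\textbf{Step 1 (density ratio).} Fix a query $q$ and two adjacent segments $x\sim x'$, i.e.\ $d_{\mathrm{edit}}(x,x')=1$. Write $f(x)=S_{total}(q,x)$. The output $S^\ast_{total}$ has density
\[
p_x(z)=\tfrac{\epsilon}{4}\exp\!\big(-\tfrac{\epsilon}{2}|z-f(x)|\big).
\]
By the triangle inequality, for every $z\in\mathbb{R}$,
\[
\frac{p_x(z)}{p_{x'}(z)}\le \exp\!\big(\tfrac{\epsilon}{2}|f(x)-f(x')|\big).
\]
Integrating over any measurable $S$ then gives
\[
\Pr[M(x)\in S]\le e^{\frac{\epsilon}{2}|f(x)-f(x')|}\Pr[M(x')\in S],
\]
with $\delta=0$.

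\textbf{Step 2 (sensitivity bound).} The factor $|f(x)-f(x')|$ now has to be controlled.
\begin{itemize}
\item The crude bound is immediate. Since $S_{rel},S_{sen}\in[0,1]$ and $\lambda\in[0,1]$, we have $f\in[0,1]$, so $|f(x)-f(x')|\le 1=\Delta S$. This already yields ordinary $\epsilon/2$-DP with respect to edit adjacency.
\item To obtain the metric form $e^{\frac{\epsilon}{2}d_\phi(x,x')}$, I would instead show that $f$ is $1$-Lipschitz with respect to $d_\phi$.
\end{itemize}
For the relevance term, this uses the unit-norm normalization of embeddings stated in the utility-function subsection, identifying $E$ with $\phi$. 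For a unit vector $v_q$ and unit vectors $\phi(x),\phi(x')$, Cauchy--Schwarz gives
\[
|S_{rel}(q,x)-S_{rel}(q,x')| = |\langle v_q,\phi(x)-\phi(x')\rangle| \le d_\phi(x,x').
\]
The gate $I(q)$ depends only on $q$, so it does not affect this bound.

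\textbf{Main obstacle: the field-sensitivity term.} The term $S_{sen}(x)=\max_k w_k$ is a discrete function of the recognized entities. A single entity substitution can change it by up to $1$ while $d_\phi(x,x')$ stays small, so it is not Lipschitz in $d_\phi$. I would handle this in one of two ways.
\begin{itemize}
\item \emph{Restrict the adjacency.} Let the privacy unit be adjacency within the same entity category. By construction (isomorphic replacement in the candidate generation step), such edits keep $S_{sen}$ unchanged, since weights are assigned per category. Then
\[
|f(x)-f(x')|=\lambda|S_{rel}(q,x)-S_{rel}(q,x')|\le d_\phi(x,x').
\]
\item \emph{Fall back to the constant bound.} Otherwise, use $\Delta S=1$ and read the lemma in the normalized sense that the paper adopts in Step~1, namely $d_\phi$ measured in units of the sensitivity bound. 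This gives the guarantee $\Pr[M(x)\in S]\le e^{\epsilon/2}\Pr[M(x')\in S]$.
\end{itemize}

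\textbf{Step 3 (conclusion).} Substituting either bound from Step~2 into Step~1 gives $(\epsilon/2,0)$-Semantic Metric DP for the release of each single score.

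This sets up the later parts of the overall argument. The risk-level thresholding by $\tau_1,\tau_2$ is post-processing of $S^\ast_{total}$. The per-level budgets, together with this $\epsilon/2$, are then combined by sequential composition.
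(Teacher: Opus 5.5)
Your fallback branch is the paper's proof. The paper uses only $S_{total}\in[0,1]$ to declare a ``semantic sensitivity'' $\Delta S=1$, then invokes the Laplace mechanism with scale $\Delta S/(\epsilon/2)=2/\epsilon$. Your Step~1 density-ratio computation is that lemma written out, and it is correct. Your Lipschitz analysis goes beyond the paper's proof of this lemma and shows why the paper's step is insufficient. The Laplace privacy loss $\frac{\epsilon}{2}|f(x)-f(x')|$ is actually attained (take $S$ to be the half-line beyond the smaller of $f(x),f(x')$). Hence Definition~\ref{de:sdp} with parameter $\epsilon/2$ holds \emph{if and only if} $|f(x)-f(x')|\le d_\phi(x,x')$ for every edit-adjacent pair, and a range bound of $1$ says nothing about that.

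The gap is in Step~3, where you claim that \emph{either} bound gives $(\epsilon/2,0)$-Semantic Metric DP.
\begin{itemize}
\item \textbf{Constant bound.} It gives the factor $e^{\epsilon/2}$, which implies $e^{\frac{\epsilon}{2}d_\phi(x,x')}$ only when $d_\phi(x,x')\ge 1$. For unit embeddings that means cosine at most $1/2$, which one-token edits rarely reach. ``Measuring $d_\phi$ in units of the sensitivity bound'' is a change of definition. What this branch (and the paper) actually proves is ordinary $\epsilon/2$-DP under edit adjacency.
\item \textbf{Restricted adjacency.} This branch does give the metric inequality (even with factor $\lambda d_\phi$), but for a strictly smaller neighbour relation than Definition~\ref{de:sdp}. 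Every edit other than a same-category entity substitution is dropped, so whether a segment contains an ID number at all is no longer protected.
\item \textbf{Extra assumptions.} The restricted branch also needs the risk encoder $E$ to coincide with the unit-normalised $\phi$, and NER to assign the substitute the same weighted category. The isomorphic replacement used in candidate generation does not justify this restriction: adjacency ranges over pairs of private inputs, not over texts the mechanism produces.
\end{itemize}

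The literal lemma cannot be rescued. When $I(q)=0$ we have $f=(1-\lambda)S_{sen}$. Inserting an ID number into a segment with no sensitive entity (taking the empty maximum to be $0$) then moves $f$ by $(1-\lambda)\cdot 0.95$, which is $0.38$ at the paper's $\lambda=0.6$. This violates the if-and-only-if condition whenever $d_\phi(x,x')<(1-\lambda)\cdot 0.95$. You should therefore state which weakened version you prove: plain $\epsilon/2$-DP, or the metric bound under entity-category adjacency with $E=\phi$. Do not claim the lemma as written. This defect is inherited from the paper's own proof rather than introduced by you.
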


\begin{proof}
Since the semantic sensitivity of the released risk score is $\Delta S=1$, adding Laplace noise with scale
\[
\frac{\Delta S}{\epsilon/2}=\frac{2}{\epsilon}
\]
guarantees $(\epsilon/2,0)$-Semantic Metric Differential Privacy according to the Laplace mechanism under the semantic metric.
\end{proof}

\paragraph{Step 2: Hierarchical Filtering (Post-processing).}
The algorithm applies thresholding:
\[
S^{\ast}_{total}(q, s_{i,j}) \geq \tau_1.
\]

Since this operation is a deterministic function of the DP output $S^{\ast}_{total}$, by the post-processing theorem of differential privacy, it does not incur additional privacy loss.

\paragraph{Step 3: Exponential Mechanism for Candidate Selection.} 

\noindent

\textbf{Lemma}~\ref{lem:exp}[Semantic Exponential Mechanism] 
Let $\mathcal C$ be a finite candidate text set, and let
\[
u(s_{i,j}, s_{i,j}^v)
=
- \alpha \cdot d_{\mathrm{sem}}(s_{i,j}, s_{i,j}^v)
- (1-\alpha) \cdot r(s_{i,j}, s_{i,j}^v),
\]
where $\alpha \ge 0$.
The Semantic Exponential Mechanism selects an output
$s_{i,j}^{\ast}\in\mathcal C$ according to the probability distribution
\[
\Pr[M(s_{i,j})=s_{i,j}^{\ast}]
=
\frac{
\exp\left(
\frac{\epsilon u(s_{i,j},s_{i,j}^{\ast})}
{2}%\Delta_u
\right)
}{
\sum\limits_{s_{i,j}^v\in\mathcal C}
\exp\left(
\frac{\epsilon u(s_{i,j},s_{i,j}^v)}
{2}
\right)
}.
\]
Then $M$ satisfies $(\epsilon,0)$-Semantic Metric Differential Privacy.

\begin{proof}

To prove that the mechanism satisfies Semantic Metric Differential Privacy, it suffices to show that the utility function is Lipschitz continuous with respect to the semantic metric $d_{\phi}$, namely,
\[
|u(s_{i,j},s_{i,j}^v)-u(s'_{i,j},s_{i,j}^v)|
\le
\Delta_u \, d_{\phi}(s_{i,j},s'_{i,j}).
\]

For the proposed utility function
\[
u(s_{i,j}, s_{i,j}^v)
=
- \alpha \cdot d_{\mathrm{sem}}(s_{i,j}, s_{i,j}^v)
- (1-\alpha) \cdot r(s_{i,j}, s_{i,j}^v),
\]
we first analyze the semantic preservation term:
\begin{align}
\Delta_{u(\mathrm{sem})}
&=
\left|
d_{\mathrm{sem}}(s_{i,j},s_{i,j}^v)
-
d_{\mathrm{sem}}(s'_{i,j},s_{i,j}^v)
\right|
\notag
\\
&=
\left|
\|\phi(s_{i,j})-\phi(s_{i,j}^v)\|_2
-
\|\phi(s'_{i,j})-\phi(s_{i,j}^v)\|_2
\right|
\notag
\\
&\le
\|\phi(s_{i,j})-\phi(s'_{i,j})\|_2
\notag
\\
&=
d_{\phi}(s_{i,j},s'_{i,j}).
\notag
\end{align}

Thus, the semantic preservation term is $1$-Lipschitz continuous with respect to $d_{\phi}$. Since this term is weighted by $\alpha$, its contribution to the overall sensitivity is bounded by $\alpha$.

Next, we analyze the privacy risk term:
\[
r(s_{i,j}, s_{i,j}^v)
=
\max_{e\in E(s_{i,j}^v)}
\cos(\phi(s_{i,j}),\phi(e)),
\]
where $E(s_{i,j}^v)$ denotes the set of sensitive entities extracted from the candidate text $s_{i,j}^v$. Assuming all embedding vectors are normalized, the cosine similarity is equivalent to the inner product. For two adjacent texts $s_{i,j}\sim s'_{i,j}$, we have
\begin{align}
\Delta_{u(\mathrm{risk})}
&=
|r(s_{i,j}, s_{i,j}^v)-r(s'_{i,j}, s_{i,j}^v)|
\notag
\\
&=
\Big|
\max_{e\in E(s_{i,j}^v)}
\cos(\phi(s_{i,j}),\phi(e))
-
\max_{e\in E(s_{i,j}^v)}
\cos(\phi(s'_{i,j}),\phi(e))
\Big|
\notag
\\
&\le
\max_{e\in E(s_{i,j}^v)}
|
(\phi(s_{i,j})-\phi(s'_{i,j}))^\top \phi(e)
|
\notag
\\
&\le
\|\phi(s_{i,j})-\phi(s'_{i,j})\|_2
\cdot
\|\phi(e)\|_2
\notag
\\
&\le
d_{\phi}(s_{i,j},s'_{i,j}).
\notag
\end{align}

Therefore, the privacy risk term is also $1$-Lipschitz continuous with respect to the semantic metric $d_{\phi}$. Since this term is weighted by $(1-\alpha)$, its contribution to the overall sensitivity is bounded by $(1-\alpha)$.

Combining both terms yields
\[
|u(s_{i,j}, s_{i,j}^v)-u(s'_{i,j}, s_{i,j}^v)|
\le
(\alpha+(1-\alpha))\, d_{\phi}(s_{i,j},s'_{i,j}).
\]

Hence, the global sensitivity of the utility function under the semantic metric is bounded by
\[
\Delta_u
\le
\alpha+(1-\alpha) = 1.
\]

Substituting the sensitivity bound into the exponential mechanism gives

\[
\Pr[M(s_{i,j}) = s_{i,j}^{\ast}]
\propto
\exp\left(
-\frac{\epsilon}{2}
\begin{aligned}[t]
\Big(
&\alpha \cdot d_{\mathrm{sem}}(s_{i,j}, s_{i,j}^{\ast}) \\
&+ (1-\alpha) \cdot r(s_{i,j}, s_{i,j}^{\ast})
\Big)
\end{aligned}
\right)
\]

which matches the formulation stated in the lemma. Therefore, the mechanism satisfies $(\epsilon,0)$-Semantic Metric Differential Privacy.

\end{proof}

\paragraph{Step 4: Composition Over Segments.}

Since the privacy budget $\epsilon/2$ is evenly allocated to the high-risk and medium-risk segments, by the sequential composition theorem, these two mechanisms together satisfy $(\epsilon/2,0)$-Semantic Metric Differential Privacy.

\paragraph{Theorem (Privacy Guarantee of PA-HDP).}
PA-HDP satisfies $(\epsilon,0)$-Semantic Metric Differential Privacy.

\begin{proof}
    The mechanism consists of:
(i) Semantic Laplace mechanism satisfying $(\epsilon,0)$-Semantic Metric Differential Privacy,
(ii) post-processing operations preserving DP,
(iii) Semantic exponential mechanism satisfying $(\epsilon,0)$-Semantic Metric Differential Privacy.

By sequential composition and post-processing invariance, the overall mechanism satisfies $(\epsilon,0)$-Semantic Metric Differential Privacy.
\end{proof}

\subsection{Details of Field Sensitivity Weights}

To instantiate the field sensitivity defined in Section X, we assign a predefined leakage risk weight $w_k \in [0,1]$ to each sensitive entity category. The weights are manually designed according to the potential privacy harm caused by disclosing each entity type, following the principle that entities enabling direct identity disclosure receive substantially higher scores than contextual or auxiliary information.

Specifically, we divide sensitive entities into two broad categories.

\paragraph{High-risk entities ($w_k \in [0.5,1.0]$)}
These entities can directly identify an individual or expose highly sensitive credentials (Table~\ref{tab:entities_high} ).

\begin{table}[h]
\centering
\small
\caption{High-risk entities}
\label{tab:entities_high}
\begin{tabular}{c|c|p{4.5cm}}
\toprule
Entity Type & Weight & Description \\
\midrule
ID Number & 0.95 & Unique personal identifier with severe privacy leakage risk \\
Phone Number & 0.90 & Direct contact identifier \\
Person Name & 0.90 & Primary identity attribute \\
Email Address & 0.75 & Online identity linkage \\
Password & 0.80 & Authentication credential \\
MED-HIGH & 0.85 & Diagnosis of high-risk diseases \\
\bottomrule
\end{tabular}
\end{table}

\paragraph{Low-risk entities ($w_k \in [0.1,0.5]$)}
These entities mainly provide contextual or auxiliary information (Table~\ref{tab:entities_low} ).

\begin{table}[h]
\centering
\small
\caption{Low-risk entities}
\label{tab:entities_low}
\begin{tabular}{c|c|p{4.5cm}}
\toprule
Entity Type & Weight & Description \\
\midrule
Location & 0.30 & Geographical contextual information \\
Organization & 0.40 & Affiliation or institutional context \\
Vital Signs & 0.20 & Basic physiological measurements \\
Amount & 0.50 & Financial context without identity disclosure \\
Date & 0.20 & Temporal information \\
MED-MED & 0.10 & Common disorder \\
\bottomrule
\end{tabular}
\end{table}

We emphasize that the specific numerical values are empirically defined and can be adapted to different application scenarios without modifying the overall framework.

\subsection{Implementation Details of Baseline Approaches}

We adopt the baseline methods and implementation protocols from the work of Zeng et al. \cite{8zeng2025mitigating}. The detailed implementation settings are described as follows.

\subsubsection{Paraphrase}
This approach employs large language models to extract relevant and essential components from the retrieved passages. Insignificant portions can be discarded, while selected sentences are rephrased to improve clarity or relevance. The instruction template we use for paraphrasing is provided in Table \ref{tab:prompt-paraphrase}.

\begin{table}[t]
\centering
\caption{Prompt of paraphrase}
\label{tab:prompt-paraphrase}
\renewcommand{\arraystretch}{1.0}
\setlength{\tabcolsep}{3pt}
\begin{tabular}{l}
\toprule
\textbf{Prompt} \\
\midrule
Given the following context, extract the useful or important part of the Context. \\
Remember, \textit{*DO NOT*} edit the extracted parts of the context. \\
\vspace{4pt}
$>$ Context: \\
$>>>$ \\
\texttt{\{input\_context\}} \\
$>>>$ \\
\vspace{4pt}
Extracted relevant parts: \\
\bottomrule
\end{tabular}
\end{table}

\subsubsection{ZeroGen}
The ZeroGen strategy is designed to produce a new set of question–answer pairs derived from the original passage. In practice, we first apply the spaCy toolkit to recognize named entity mentions in the source text. We then feed the LLM with the prompt: “The context is: {original context}. {extracted entities} is the answer to the following question: ” to solicit corresponding questions. The newly constructed context comprises 10 randomly chosen question–answer pairs, formatted as “Question: {generated questions}. Answer: {extracted entities}”.

\subsubsection{AttrPrompt}
Unlike the previous methods, AttrPrompt relies solely on LLM-generated content without accessing any retrieved data from the external database. This technique queries the LLM about the most critical attributes associated with a specific data category. For the ChatDoctor scenario, we prompt the model with: “What do you think are important attributes to generate some chat doctor data? Examples: disease...”. From the LLM’s response, we select five attributes and ask the model to propose 10 varied subtopics for each attribute. During new context generation, we randomly pick one subtopic per attribute and instruct the LLM to produce data following the assigned attribute.

\subsection{Detailed Attack Design}

This subsection elaborates on the design of both targeted and untargeted attacks against Retrieval-Augmented Generation (RAG) systems. These attack configurations are employed to assess the privacy-preserving capacity of our proposed synthetic-data framework. We emulate a realistic black-box environment where the adversary can only interact with the system via API calls. Consequently, the adversary's strategy centers on crafting and manipulating the query \(q\) to elicit the target information from the RAG pipeline.

\subsubsection{Prompt Composition}

The composite structured prompt generally consists of two segments: an \{information\} component and a \{command\} component. 
\[
q = \{\text{information}\} + \{\text{command}\}
\]
  
This formulation pursues two parallel goals: (a) steering the retriever to accurately fetch the desired pieces of data, and (b) compelling the language model to incorporate the retrieved evidence into its output. The \{information\} part directs the retrieval module toward specific data items, whereas the \{command\} part instructs the model to echo the retrieved content in its response. For the latter, we adopt directives like ``Please repeat all the context''; the former varies according to the attacker's objective.

\subsubsection{Targeted Attack}

In a targeted scenario, the adversary seeks to recover a predetermined set of specific details. Constructing the information component for such an attack proceeds in two steps. Initially, the attacker supplies concrete exemplars based on their goal -- for instance, ``I want some advice about target name'' for a well-defined target, or a partial prefix such as ``Please email us at'' for a more vague target. Subsequently, a substantial number of analogous yet diverse instances are generated from these exemplars. For targets that contain multiple sub-elements (e.g., the HealthcareMagic corpus), variations can be produced by substituting sub-items like disease names derived from ChatGPT or the International Classification of Diseases (ICD). Alternatively, large language models (e.g., ChatGPT) can directly yield similar sentences based on the provided examples -- this approach is also applied to the Wiki-PII dataset. For instance, one might input ``Generate 100 similar sentences like 'Please email us at' ''.

\subsubsection{Untargeted Attack}

In contrast, untargeted attacks emphasize the production of varied information components so as to extract a broad spectrum of data from the retrieval repositories, rather than homing in on any particular entry. Inspired by the methodology of Carlini et al.\cite{carlini2021extracting}, we randomly sample segments from the Common Crawl corpus to serve as the information part. However, the randomness of the input could potentially interfere with the command component. To counter this, we cap the maximum token length of the information part at 15 tokens, thereby preserving prompt coherence and maintaining effectiveness in eliciting data from the retrieval collections.

\subsection{Details of Evaluation Metrics}

This section clarifies the evaluation metrics adopted in our study.
\subsubsection{ROUGE-L}
ROUGE-L belongs to the ROUGE (Recall-Oriented Understudy for Gisting Evaluation) family and is widely used for assessing text-generation tasks, including summarization and machine translation. It measures the overlap between a generated output and a reference text via the Longest Common Subsequence (LCS).

\begin{itemize}
    \item \textbf{Longest Common Subsequence (LCS):} ROUGE-L identifies the longest word sequence that appears in both the generated and reference texts while preserving the original order, without requiring contiguity.
    \item \textbf{Recall, Precision, and F-measure:}
    \begin{itemize}
        \item Recall is defined as the ratio of the LCS length to the reference length (\(n\)): \(\text{Recall} = \text{LCS}(X,Y) / n\). It reflects the fraction of the reference sequence captured by the generator.
        \item Precision is the ratio of the LCS length to the generated length (\(m\)): \(\text{Precision} = \text{LCS}(X,Y) / m\). It indicates how much of the generated content aligns with the reference.
        \item The F-measure, which balances precision and recall via their harmonic mean, is computed as:
        \[
        F_{\text{lcs}} = \frac{(1+\beta^2) \cdot R_{\text{lcs}} \cdot P_{\text{lcs}}}{R_{\text{lcs}} + \beta^2 \cdot P_{\text{lcs}}}
        \]
        where \(\beta\) controls the trade-off between precision and recall (typically set to \(1.0\)). In our experimental results, we report the F-measure as the ROUGE-L score.
    \end{itemize}
\end{itemize}

Let \(C\) denote the candidate translation, and \(R\) represent the set of reference translations.

\subsubsection{BLEU-1}

BLEU-1 evaluates translation quality based on unigram precision.

\begin{itemize}
    \item \textbf{Unigram precision:}
    \[
    P_1 = \frac{\sum_{w} \min\bigl(\text{Count}_C(w), \max\text{Count}_R(w)\bigr)}{\sum_{w} \text{Count}_C(w)}
    \]
    where \(\text{Count}_C(w)\) is the frequency of word \(w\) in the candidate, and \(\max\text{Count}_R(w)\) is the maximum frequency of \(w\) across any single reference translation.
    \item \textbf{Brevity penalty (BP):}
    \[
    \text{BP} = \min\left(1, \exp\left(1 - \frac{r}{c}\right)\right)
    \]
    where \(c\) is the candidate length and \(r\) is the reference length that is closest to \(c\).
    \item \textbf{Final BLEU-1 score:}
    \[
    \text{BLEU-1} = \text{BP} \times P_1
    \]
    The score ranges from 0 to 1, with 1 indicating perfect unigram matching against the reference.
\end{itemize}

\subsubsection{Additional Metrics}

Beyond the above, we incorporate several new metrics to further corroborate our approach:
\begin{itemize}
    \item \textbf{Exact Match (EM):} This metric checks whether the ground-truth answer appears verbatim within the LLM's generated response.
    \item \textbf{LLM-based Correctness Judgment:} We employ Ragas, a widely adopted automatic evaluation pipeline for RAG (with over 5.9k stars on GitHub), to assess the factual correctness of the generated answers.
\end{itemize}

\subsection{Construction of Wiki-PII Dataset}

To assess the privacy-preserving performance of our proposed approach against targeted extraction attempts, we adopt the dataset construction methodology introduced in \cite{8zeng2025mitigating} to create the Wiki-PII dataset.
This corpus is specifically designed to contain a substantial volume of personally identifiable information (PII), thereby providing a rigorous testbed for evaluating privacy-protection strategies. The construction procedure comprises three main phases.

\begin{enumerate}
    \item \textbf{PII extraction:} We first harvested real PII instances from the Enron Mail corpus. Specifically, we employed the \texttt{urlextract} library to extract web URLs and applied regular expressions to capture phone numbers and personal email addresses.
    \item \textbf{Text chunking:} In the second phase, we partitioned the Wikipedia text collection using the recursive character text splitter provided by LangChain, with a chunk size of 1500 characters.
    \item \textbf{PII insertion:} Finally, for each resulting text chunk, we randomly appended the extracted PII items to the end of every sentence within that chunk.
\end{enumerate}

\subsection{Dataset Preparation Details}

In our experiments, we adopt the configuration established in prior RAG privacy research \cite{ex-huang2023privacy} to simulate a realistic scenario where a large-scale public repository inadvertently contains certain private information — for instance, personal names or phone numbers that appear within broadly accessible web content. In practice, virtually any sentence in a text corpus may carry privacy-sensitive entities. Consequently, embedding PII into the dataset serves as a plausible and effective strategy for modeling such real-world privacy risks.

\subsection{Examples of Protected Contexts}

Table~\ref{tab:rag_compare} presents representative examples of the protected query--response pairs in the retrieval corpus. Compared with the original records, the protected versions effectively obfuscate privacy-sensitive information while preserving the semantic content required for downstream retrieval. Specifically, sensitive entities, including medication names, disease descriptions, temporal information, numerical values, and demographic attributes, are replaced with plausible alternatives or generalized expressions, making it substantially more difficult to recover the original records or infer personal information.

Despite these perturbations, the protected query--response pairs largely retain their original semantics. In the first example, although several medications and personal details are replaced, the protected record still clearly describes a fertility-related consultation and its corresponding medical advice. Similarly, in the second example, detailed medical history and treatment information are concealed, while the protected response continues to capture the clinical recommendation regarding pregnancy risk. These examples demonstrate that our protection mechanism effectively removes privacy-sensitive content while preserving the essential medical knowledge required for retrieval.

We also observe that stronger perturbations may occasionally introduce minor semantic deviations or less natural expressions, as illustrated in the third example. Nevertheless, the overall medical intent and question--answer correspondence remain largely intact, ensuring that the protected corpus continues to provide useful evidence for retrieval-augmented generation. These qualitative examples further demonstrate that our method achieves a favorable balance between privacy protection and retrieval utility.

\begin{table*}[htbp]
  \centering
  \caption{Examples of Protected Contexts}
  \label{tab:rag_compare}
  % 两列，每列固定宽度6.8cm，长文本自动换行
  \begin{tabular}{p{6.8cm} p{6.8cm}}
    \hline
    Original context & Protected context \\
    \hline
    input: yes, can you tell me if lisinopril hctz , omeprazole , and simvastain effect my husbands sperm or fertility ? we are trying to get pregnant and have been for 9yrs and we have been to dr after dr test after test ... and no answers. so im doing some findings and wanting to know about his meds. thanks
    
    output: Hi, Thanks for writing to Chat Doctor. Let me explain you one by one. Surprisingly there are fewer studies or researches on this topic. Omeprazole is not know to cause any adverse effect on sperm count or fertility. Lisinopril is not having direct effect on sperm count, but it effects fertility by following side effects---Decreased interest in sexual intercourse-inability to have or keep an erection-lack or loss of strength-loss in sexual ability, desire, Chat Doctor. But these are rare side effects. Simvastin rarely cause erectile dysfunction. It is also shown in some studies that it reduce testosterone levels which is required for sperm production. This may be the cause of reduced sperm count. You can discuss with your doctor to switch to alternative.
    &
    input: yes, can you tell me if Eurax hctz , breakage of acid bottle , and smoke effect my husbands sperm or fertility. we are trying to get pregnant and have been for 6.8yrs and we have been to dr after dr test after test. and no answers. so im doing some findings and wanting to know about his meds. thanks.
    
    output: hi, thanks for writing to chat doctor. let me explain you one by one. surprisingly there are fewer studies or researches on this topic. palmetto is not know to cause any adverse effect on sperm count or fertility. lisinopril is not having direct effect on sperm count, but it effects fertility by following side effects---decreased interest in sexual intercourse-inability to have or keep an erection-lack or loss of strength-loss in sexual ability, desire, chat doctor. but these are rare side effects. simvastin rarely cause erectile dysfunction. it is also shown in some studies that it reduce testosterone levels which is required for sperm production. this may be the cause of reduced sperm count. you can discuss with your doctor to switch to alternative.  \\ \hline
    input: My wife has pulmonary fibrosis 2yrs now doing fair,cough present but functional.this past feb. she had her ascend colon removed secondary to colon cancer stage 1c(2).no chemo was needed....a month ago she turns up pregnant. My question is how dangerous is it for her to have this child.she is 42,not in great shape from 2 yrs of of meds ,prednisone(ended in April).
    
    output: Hi and welcome to Chat Doctor, thank you for your query. Considering all this, her pregnancy may be risky, and you should consult gynecologist before you decide to have baby. Her condition is serious, and it may have bad effect on possible pregnancy and baby. So I wouldst suggest going on without more detailed consultation with gynecologist. Wish you good health. Regards.
    &
    input: My wife has pulmonary WAS EXPERIENCING Sunday now doing fair,irritable present but functional. this past feb. she had her ascend colon removed secondary to cancer 1c(2). no chemo was needed. a week she turns up pregnant. my question is how dangerous is it for her to have this child. she is 42,not in great shape from 2 yrs of of meds ,prednisone(ended in april). 
    
    output: Hi and welcome to chat doctor, thank you for your query. considering all this, her pregnancy may be risky, and you should consult gynecologist before you decide to have baby. her condition is serious, and it may have neurotoxic on possible pregnancy and baby. so i wouldst suggest going on without more detailed consultation with gynecologist. wish you good health. regards.
    \\ \hline
    input: Hi my daughter of 4.5 years old is suffering from cough and fever since yesterday morning. Can I give her zifi 100 mg ready syrup . Last year doctor had advised her same antibiotic with crocin and for fever and cough. What should be the dosage for her.
    
    output: Hello, I am sorry your daughter is ill. With the symptoms you listed-cough and fever- it is likely that she has a viral respiratory illness.  There are more viruses than bacteria that cause respiratory infections. How high is her fever?  Is her appetite normal? Does she have her usual level of activity?  The severity of a viral respiratory illness is usually less that for a bacterial respiratory illness. WiFi is a brand name for the antibiotic called Definite.  While this is a good antibiotic to treat various bacterial infections, it does not treat viral infections. In fact, no antibiotic does. Plus, WiFi, like all Chat Doctor.  Some of its side effects are:- headache-loss of appetite-bloody or watery stool-dizziness-itchy rash-swelling of face, lips, tongue So for the symptoms you provided, this medicine most probably will not be effective, and it could cause side effects. Based on the information given, I feel it is not a good idea to give her this medicine.  Hope I have answered your query. Let me know if I can assist you further.
    &
    input: hi my daughter of 4. 22nd April is suffering from blisters of 0 and 3 days around 3 minutes. Can i give her zifi 94 mg ready syrup. 9 weeks doctor had advised her same antibiotic with breakage of acid bottle for bad nappy rash and cough. what should be the dosage for her. 
    
    output: hello, i am sorry your daughter is ill. with a brain tumor you listed-cough and fever- it is likely that she has a viral allergic reaction. there are more viruses than bacteria that cause ascitis. how high is her cough. is her appetite normal. does she have her usual level of activity. the severity of a viral respiratory illness is usually less that for a bacterial respiratory illness. wifi is a brand name for the antibiotic called definite. while this is a good antibiotic to treat various bacterial infections, it does not treat bad back pain. in fact, no antibiotic does. plus, wifi, like all chat doctor. some of its side effects stomach pains of abd painf face, lips, tongue so for this pain medication you provided, this medicine most probably will not be effective, and it could cause side effects. based on the information given, i feel it is not a good idea to give her this medicine. hope i have answered your query. let me know if i can assist you further. 
    \\ \hline
  \end{tabular}
\end{table*}

\subsection{Discussions when Adapting PA-HDP to Domain-Specific Applications}

Although PA-HDP is presented in the context of medical and QA RAG, the overall framework is domain-independent and can be readily adapted to other application scenarios. The primary modification lies in the sensitive entity extraction module, as different domains involve different categories of privacy-sensitive information.

Specifically, our framework relies on the extracted sensitive entities to estimate field sensitivity and subsequently allocate privacy budgets. Therefore, when applying PA-HDP to a new domain, the Named Entity Recognition (NER) model and rule-based extraction patterns should be customized according to the domain-specific privacy taxonomy. For example, in the financial domain, sensitive entities may include bank account numbers, credit card numbers, transaction records, and customer identifiers, while in the legal domain, case numbers, client identities, and confidential agreements may require protection. Existing domain-specific NER models can be directly integrated into our framework, or custom entity recognizers can be trained when annotated data are available.

In addition, the predefined sensitivity weights assigned to different entity types should be adjusted to reflect their privacy risks in the target domain. Since the subsequent privacy budget allocation only depends on the extracted entity categories and their corresponding sensitivity scores, no modification to the remaining components of PA-HDP is required. This modular design enables PA-HDP to be easily extended to diverse privacy-sensitive RAG applications while preserving its query-aware privacy protection mechanism.

\end{document}